\documentclass[11pt]{article}

\newif\ifarxiv \arxivtrue

\usepackage[letterpaper,top=2cm,bottom=2cm,left=3cm,right=3cm,marginparwidth=1.75cm]{geometry}

\newcommand{\ARCjournal}{ARC Geophysical Research }
\newcommand{\ARCyear}{(2025) }
\newcommand{\ARCvolume}{1}
\newcommand{\ARCpaper}{---}

\usepackage{fancyhdr}
\ifarxiv
    \pagestyle{plain}
\else
    \fancypagestyle{plain}{\fancyhf{} \fancyhead[C]{{\scriptsize\color{blue} \ARCjournal \ARCyear \ARCvolume, \ARCpaper}} }
    \fancypagestyle{fancy}{\fancyhf{} \fancyhead[L]{{\scriptsize\it \ARCauthors}} \fancyhead[R]{{\scriptsize\it \ARCjournal \ARCyear \ARCvolume, \ARCpaper}}\fancyfoot[C]{\thepage} }
    \pagestyle{fancy}
\fi

\usepackage{authblk}

\usepackage[numbers,sort&compress]{natbib}

\usepackage[colorlinks=true, allcolors=blue]{hyperref}
\usepackage{orcidlink}

\usepackage{amsmath}

\usepackage{graphicx}
\usepackage{caption}
\usepackage{subcaption}

\usepackage{xcolor}

\newcommand\blfootnote[1]{%
  \begingroup
  \renewcommand\thefootnote{}\footnote{#1}%
  \addtocounter{footnote}{-1}%
  \endgroup
}


\ifarxiv
\else
    \usepackage[
        type={CC},
        modifier={by-nc},
        version={4.0},
    ]{doclicense}
\fi

\usepackage{lineno}
\ifarxiv
\else
    \linenumbers
\fi

\usepackage{ctmmath-v3} 

\usepackage[utf8]{inputenc} 
\usepackage[T1]{fontenc}    
\usepackage{url}            
\usepackage{booktabs}       
\usepackage{amsfonts}       
\usepackage{nicefrac}       
\usepackage{microtype}      
\usepackage{algorithm}
\usepackage{algpseudocode}
\usepackage{array}
\usepackage{multirow}
\usepackage[english]{babel}
\usepackage{float}  
\newtheorem{proposition}[theorem]{Proposition}


\title{Calibrating Geophysical Predictions under Constrained Probabilistic Distributions}
\author[1]{Zhewen Hou\orcidlink{0009-0004-6506-2736}}
\author[3]{Jiajin Sun\orcidlink{0000-0000-0000-0000}}
\author[5]{Subashree Venkatasubramanian\orcidlink{0009-0007-9169-6814}}
\author[4]{Peter Jin\orcidlink{0009-0001-1171-4944}}
\author[2]{Shuolin Li\orcidlink{0000-0002-5719-1413}$^*$}
\author[1,2]{Tian Zheng\orcidlink{0000-0003-4889-0391}\thanks{Corresponding Author}}

\affil[1]{{\scriptsize Department of Statistics, Columbia University, NY}}
\affil[2]{{\scriptsize NSF Science and Technology Center for Learning the Earth with AI and Physics, Columbia University, NY}}
\affil[3]{{\scriptsize Department of Statistics, Florida State University, FL}}
\affil[4]{{\scriptsize Department of Applied Physics and Applied Mathematics, Columbia University, NY}}
\affil[5]{{\scriptsize Department of Computer Science, Columbia University, NY}}

\date{\today}
\newcommand{\ARCauthors}{Hou et al.}

\begin{document}

\ifarxiv
    \maketitle
\else
    \begin{figure}[h]
        \centering
        \includegraphics[width=\textwidth]{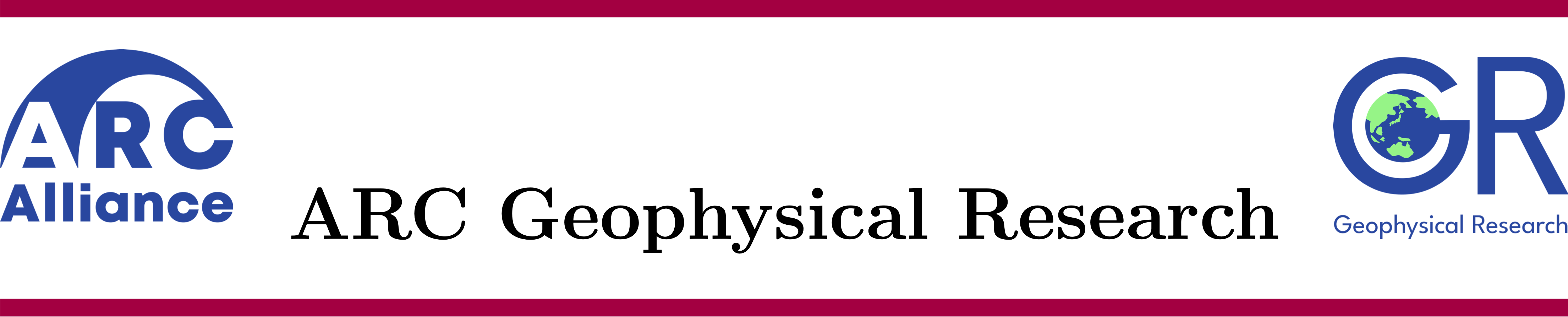}
    \end{figure}
    {\let\newpage\relax\maketitle}
    
    \noindent
    \hrule
\fi

\begin{abstract}
Machine learning (ML) has shown significant promise in studying complex geophysical dynamical systems, including turbulence and climate processes. Such systems often display sensitive dependence on initial conditions, reflected in positive \textit{Lyapunov exponents}, where even small perturbations in short-term forecasts can lead to large deviations in long-term outcomes. Thus, meaningful inference requires not only accurate short-term predictions, but also consistency with the system’s long-term attractor that is captured by the marginal distribution of state variables. Existing approaches attempt to address this challenge by incorporating spatial and temporal dependence, but these strategies become impractical when data are extremely sparse. In this work, we show that prior knowledge of marginal distributions offers valuable complementary information to short-term observations, motivating a distribution-informed learning framework. We introduce a calibration algorithm based on normalization and the \textit{Kernelized Stein Discrepancy} (KSD) to enhance ML predictions. The method here employs KSD within a \textit{reproducing kernel Hilbert space} to calibrate model outputs, improving their fidelity to known physical distributions. This not only sharpens pointwise predictions but also enforces consistency with non-local statistical structures rooted in physical principles. Through synthetic experiments—spanning offline climatological CO\textsubscript{2} fluxes and online quasi-geostrophic flow simulations—we demonstrate the robustness and broad utility of the proposed framework.
\vskip 8 pt
\noindent
{\it Keywords: Distribution-Informed Prediction Calibration, Kernelized Stein Discrepancy, CO\textsubscript{2} Fluxes, Quasi-geostrophic Flow, Dynamical System}
\vskip  8 pt
\noindent
\vskip 8 pt
\end{abstract}
\blfootnote{E-mail addresses: tian.zheng@columbia.edu; shuolin.li@columbia.edu}
\blfootnote{\vspace{-15pt}}
\vspace{-40 pt}

\newpage
\section{Introduction} \label{intro}
Dynamical systems are central to a wide range of mechanical and geophysical problems, including weather and climate modeling, fluid mechanics, and control systems. These systems are governed by differential equations that describe how state variables evolve over time. Accurate modeling of such evolution is crucial for forecasting, adaptation, and decision-making. However, many real-world systems are nonlinear, high-dimensional, and chaotic, making them extremely challenging to model or simulate with fidelity. Recent advances in machine learning (ML) have enabled the development of powerful surrogate models for dynamical systems, yielding gains in computational efficiency, expressiveness, and short-term predictive accuracy across various domains (e.g., \cite{kovachki, ross2023benchmarking, li2024machine}). Nonetheless, these data-driven models often face significant limitations in scientific settings. In particular, they tend to require dense and representative training data, which is rarely available in many applications. For instance, estimating air-sea CO\textsubscript{2} fluxes is critical for understanding the ocean’s role in regulating atmospheric carbon. Yet the observations are extremely sparse and biased, covering less than 1\% of the ocean’s surface and exhibiting substantial spatial sampling gaps \citep{socat1, socat2}.

Even when data are sufficient for training, surrogate models are typically optimized to minimize short-term prediction error (e.g., one-step forecast loss), often ignoring the long-term behavior of the system. This creates a fundamental mismatch in many physical systems governed by feedbacks and noise:
${dy}/{dt} = f(y) + \xi$,
where $f$ encodes the deterministic dynamics and $\xi$ captures stochasticity or unresolved processes. Models trained under such myopic loss functions may appear accurate in offline settings but fail during long-term rollout—drifting from the true attractor, violating conservation laws, or misrepresenting key statistical properties.

The quasi-geostrophic (QG) turbulence model offers a representative example. It is a widely used testbed in geophysical fluid dynamics due to its nonlinear interactions and rich turbulent behavior, yet reduced complexity \cite{kovachki, ross2023benchmarking}. ML surrogates trained on QG data often perform well over short horizons but diverge in online deployments, producing unphysical spectra or failing to capture the system’s invariant measure \cite{frezat2022posteriori, kovachki, ross2023benchmarking}. 
For example, in Figure~\ref{fig:pyqg}, a fully connected neural network (FCNN) \citep{ross2023benchmarking} is used as a surrogate parameterization for the turbulent flux term in a low-resolution QG model, trained to match turbulent quantities computed from a high-resolution reference simulation. The panels show four standard spectral diagnostics as functions of zonal wavenumber, computed a posteriori from long online integrations of each model. In all four diagnostics, the high-resolution model (blue solid line) exhibits a characteristic spectral shape and amplitude across scales that we regard as the target behavior. The coarse low-resolution model (orange dashed line) substantially underestimates the magnitude of energy transfers and misrepresents the overall level of generation and dissipation in the system. Augmenting the low-resolution model with the FCNN (green dashed line) yields only marginal changes to these spectra: in some panels there is a slight reduction of the bias relative to the purely low-resolution model, but overall the FCNN curves remain much closer to the low-resolution baseline than to the high-resolution reference. In particular, the FCNN does not recover the high-resolution levels of kinetic and available potential energy transfer or the correct balance between generation and friction. Thus, even though the FCNN improves short-term predictions of the resolved velocity fields, it fails to restore the correct long-term statistical structure of the QG turbulence, underscoring the need for methods that explicitly incorporate physical or statistical knowledge beyond one-step forecast losses.


\begin{figure}[htbp!]
\centering 
\vspace{-.3cm}
\includegraphics[width=0.8\textwidth]{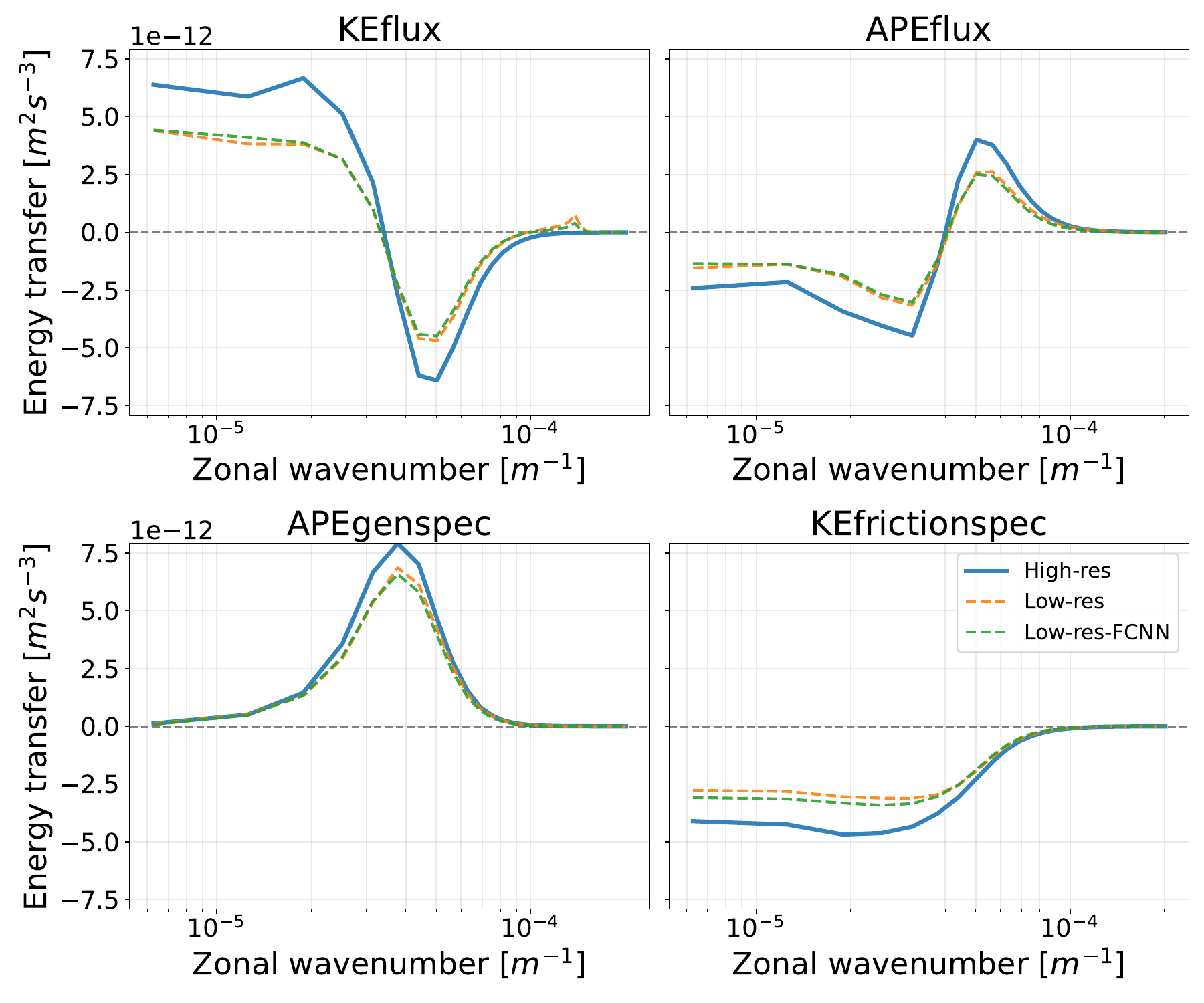} 
\caption{Spectral energy diagnostics for quasi-geostrophic turbulence. Each panel shows a derived statistical quantity (computed from the model trajectories, not directly learned by the FCNN) as a function of zonal wavenumber. The high-resolution simulation (blue solid line) provides the reference spectra. The coarse low-resolution model (orange dashed line) systematically underestimates the magnitude of energy transfers and misrepresents the level of generation and friction, with pronounced differences in several of the spectra. Adding an FCNN-based parameterization (green dashed line) produces only modest changes: across all four diagnostics, the FCNN curves remain much closer to the low-resolution baseline than to the high-resolution reference, with slight improvements in some wavenumber ranges but no recovery of the high-resolution spectral amplitudes. Panel titles indicate, respectively, spectra of kinetic energy flux (KEflux), available potential energy flux (APEflux), available potential energy generation (APEgenspec), and kinetic energy frictional effects (KEfrictionspec). Together, these results illustrate that a surrogate trained solely on short-horizon prediction errors can improve pointwise fields yet still fail to reproduce the correct long-term spectral and statistical structure of the turbulent flow.
}
\label{fig:pyqg}
\end{figure}

Physics-informed ML approaches such as PINNs \citep{cuomo2022scientific} and Bayesian neural networks \citep{jospin2022hands} attempt to embed physical structure directly into the learning process. However, these methods can be brittle under data sparsity or vanishing gradient, difficult to tune, or inapplicable when governing equations are unknown. Moreover, they do not fully exploit an important yet underutilized source of information: long-term steady-state distribution of the system. Dynamical systems often admit stationary distributions—statistical summaries that reflect the underlying physics and are robust to data irregularities. Prior work in stochastic modeling has shown that such distributions can encode essential features of the system \citep{gottwald2016stochastic}, yet they are rarely used directly to correct ML models' online behaviors.
In this paper, we propose a \textit{post hoc} correction framework that leverages knowledge of steady-state distribution of the system to improve the fidelity of pretrained ML surrogates. Rather than modifying the training objective or architecture, we introduce a lightweight, model-agnostic calibration method based on output normalization and kernelized Stein discrepancy (KSD). This approach adjusts model predictions to better align with the system’s physical statistics while preserving short-term accuracy.

The main contributions of this paper are threefold. First, we introduce a novel post hoc calibration method based on normalization and kernelized Stein discrepancy, which offers a principled alternative to hybrid loss approaches by leveraging knowledge of steady-state distribution of the system to enhance predictive accuracy. Second, we provide an intuitive understanding of the method by drawing conceptual connections to empirical Bayes techniques, supported by illustrative toy examples that clarify its underlying mechanism. Finally, we demonstrate the practical effectiveness of the proposed approach through applications to air-sea CO\textsubscript{2} flux estimation and turbulent flux modeling in quasi-geostrophic flows, showing improved performance and validating its utility in real-world scenarios.

\section{Background and Motivation}
\label{background}
\label{sec:ml4ds}

\subsection{Distribution-Informed Machine Learning for Dynamical Systems Projections} 

One natural approach to incorporate expected distributional behaviors of a dynamical system in ML model training is through a hybrid loss function during offline training that combines point-wise short-term prediction losses with regularization terms for the distributional differences between predictions and ground truth \citep {schiff2024dyslim, platt}. However, stochastic climate theory \citep{gottwald2016stochastic,franzke2020structure} suggests that the stable measure over attractors at a larger scale may depend more on system coefficients than only the instantaneous past or forcings at a smaller scale \citep{ghil}. It is therefore of concern that traditional ML emulators fitted using individual steps may primarily capture the forced response of the system and not the overarching distribution, due to the system’s chaotic nature. \citet{chattopadhyay_data-driven_2020} evaluated RC–ESN (reservoir computing-echo state network), ANN (a feed-forward artificial neural network), and RNN-LSTM (recurrent neural network with long short-term memory), for both short-term predictions and reproducing the long-term statistics. ANN could only predict short-term trends but failed to track the steady-state distribution. RC-ESN and RNN-LSTM managed to reproduce long-term probability distribution even after ``losing the trajectory.''  This provides empirical evidence that accurate short-term prediction and reproducing steady-state distribution draw information from data differently. 

\subsection{Kernelized Stein Discrepancy (KSD)}\label{sec:ksd-bkgrnd}

KSD \citep{gorham2015measuring} has become a highly versatile tool in ML. Originally proposed as a powerful tool for measuring the goodness-of-fit for probabilistic models, KSD has been extended to other ML tasks such as variational inference \citep{liu2016stein} and model selection \citep{gorham_measuring_2020}. KSD utilizes the properties of the Stein operator within a Reproducing Kernel Hilbert Space (RKHS) and offers a robust framework for testing the hypothesis that a sample is drawn from a specified probability distribution. For a density \( p \) and a differentiable function \( f \), the Stein operator \(\mathcal{A}_p\) is defined as:
\begin{equation}
    \mathcal{A}_p f(x) = s_p(x) f(x) + \nabla_x f(x),
\end{equation}
where $s_p(x) = \nabla_x \log p(x) = \frac{\nabla p(x)}{p(x)}$ is the (Stein) score function of $p$. The operator is constructed such that its expectation under the target distribution \(p\) is zero, i.e. $\mathbb E_{x \sim p} [\mathcal{A}_p f(x)] = 0$. Leveraging this property allows one to measure discrepancies \(\mathbb{E}_{x \sim q} [\mathcal{A}_p f(x)]\) between two distributions \(p\) and \(q\), which leads to the definition of the Stein discrepancy:
\begin{equation}
    \sqrt{S(p, q)} = \max_{f \in \mathcal{F}} \mathbb{E}_{x \sim q} [\mathcal{A}_p f(x)],
\end{equation}
where \(\mathcal{F}\) is an appropriate set of functions. To ensure \(\mathcal{F}\) is sufficiently broad, we can employ an infinite number of basis functions. This approach naturally leads to the KSD, where \(\mathcal{F}\) is a unit ball in a RKHS:
\begin{align}
     S(p, q) 
    = \mathbb{E}_{x, x' \sim p} [(s_q(x) - s_p(x)) k(x, x') (s_q(x') - s_p(x'))],
\end{align}
where \(k\) is a selected kernel function. This formulation allows the test to effectively capture complex dependencies within the data without requiring derivatives or integrals of the density \( p \). \citet{liu2016kernelized, chwialkowski2016kernel} propose adaptations of U-Statistics to estimate the Stein discrepancy:
\begin{equation}
    U_q(Z_{1:n}) = \frac{1}{n(n-1)} \sum_{1 \leq i \neq j \leq n} u_q (Z_i, Z_j),
\end{equation}
which is an unbiased estimator of \( S(p, q) \) where \( Z, Z' \) are independent copies from \( p \). Here, the alternative function \( u_q \) is typically defined in terms of the kernel function \( k \) and the score function $s_q$, given by:
\begin{align}
    u_q(x, x') = s_q(x) k(x, x') s_q(x') + s_q(x) \nabla_{x'} k(x, x') + s_q(x' )\nabla_x k(x, x') + \nabla_{x, x'} k(x, x'),
\end{align}
where \( \nabla_x k(x, x') \) and \( \nabla_{x'} k(x, x') \) are the gradients of the kernel function with respect to \( x \) and \( x' \), respectively. This function \( u_q \) captures the interaction between the gradients and the kernel, facilitating the computation of the Stein discrepancy.

%
%

\subsection{Related Work}

KSD is connected with other distributional measures in the literature. 
%
It can be seen as a generalization of Fisher Divergence under RBF kernel with bandwidth $h \to 0$ \citep{liu2016kernelized}. 
%
KSD can also be considered as a special case of MMD with a Steinalized kernel $u_q(x, x')$ that incorporates the score function $s_q$ of the target distribution $q$.
KSD is connected to the KL divergence as the derivative of KL divergence. \citet{liu2016stein} illustrated the relation in the context of variational inference. The KL divergence between an approximating distribution and the target distribution can be minimized by considering a small perturbation to the identity map, characterized by a smooth function \(\phi(x)\) and a small scalar \(\epsilon\). 
    Theorem 3.1 in \cite{liu2016stein} points out that
    \[
    \left. \nabla_{\epsilon} \mathrm{KL}(q_{[T]} \| p) \right|_{\epsilon=0} = -\mathbb{E}_{x \sim q} [ \text{trace}(\mathcal{A}_p \phi(x)) ],
    \]
    where \(\mathcal{A}_p \phi(x)\) is the Stein operator. This connection implies that the steepest descent direction in KL divergence corresponds to the optimal perturbation direction, which is precisely what KSD measures. 

\section{Distribution-Informed Prediction via KSD Calibration}
\label{sec:calibration}
Consider a learning task with training dataset $\mathcal{D}_{\text{train}}$, consisting of input features $\mathbf{X}_{\text{train}} \in \mathbb{R}^{d_x}$ and corresponding outputs $Y_\text{train} \in \mathbb{R}$, where the dependence between $\mathbf{X}$ and $Y$ is to be modeled by a ML model $M$. It is believed that a dynamical system generates $Y$, and prior knowledge suggests a marginal {\em knowledge probability distribution} $p$. 

\subsection{Calibration by KSD gradient descent}
 We propose a {\em post hoc} approach to improve predictions produced by conventional approaches mentioned in  {\bf Sec.}~\ref{sec:ml4ds}. Applying a fitted model $M(\cdot;\hat{\mathbf{\theta}})$ based on $\mathcal{D}_{\text{train}}$ to new input $\mathbf{X}$ provides us with the {\em raw} predictions $\hat{Y}$. Building on the KSD metric discussed in  {\bf Sec.}~\ref{sec:ksd-bkgrnd}, we propose a calibration update that aims to better align the raw predictions $\hat{Y}$ with the {\em knowledge distribution} $p$. This step does not require additional training data and operates solely on $\hat{Y}=\{\hat{y}_i, i=1, \ldots, n\}$ and $p$. 
%
KSD allows for a straightforward calculation of the discrepancy between a sample of observations and a knowledge distribution without the need for integration or optimization. It provides both a simple, intuitive interpretation and a solid theoretical foundation. In contrast, alternative statistical metrics would introduce additional computational overhead.  The gradient step can also be interpreted as a Wasserstein-2 (metric) gradient step using the KSD loss function, applied to the empirical distribution \(\frac{1}{n}\sum \delta_{y_i}\). 

We use the KSD $U_{p}(\hat{y}_{1:n})$ to measure how closely the predictions $\hat{y}_i$ align with $p$, where a smaller value of KSD indicates a better fit. Denote the derivatives of $U_p$ with respect to $\hat{y}_i$ as 
\[
D(\hat{y}_i) = \frac{\partial U_p(\hat{y}_{1:n})}{\partial \hat{y}_i}.\] 
We propose a gradient descent approach, selecting a small step size $\lambda$ to update the predictions as $\tilde{y}_i = \hat{y}_i - \lambda \cdot D(\hat{y}_i)$, optimizing $U_{p}$ to integrate information from both the raw predictions and the knowledge distribution.

\subsection{Theoretical Justification}

We consider a simple setting, where we assume the raw prediction equals the true outcome value plus some noise, i.e., $\hat y_i = y_i + \varepsilon_i $, $\varepsilon_i $ i.i.d.\ $\sim N(0,\sigma^2)$. We have the following result. The proof can be found in Appendix~\ref{theoryappendix}.

Let $\tilde y_i = \hat y_i - \lambda \frac{\partial U_p(\hat y_{1:n})}{\partial \hat y_i} $. Then $$\lambda ={ \sigma^2\sum_{i=1}^n\mathbb{E}\left\{\frac{\partial^2 U_p(\hat y_{1:n})}{\partial \hat y_i^2} \right\} } \bigg/ { \sum_{i=1}^n \mathbb{E}\left\{\frac{\partial U_p(\hat y_{1:n})}{\partial \hat y_i} \right\}^2 }$$ minimizes $\sum_{i=1}^n \mathbb E(\tilde y_i - y_i)^2 $. With this choice of $\lambda$, we have $\sum_{i=1}^n \mathbb E(\tilde y_i - y_i)^2 \leq \sum_{i=1}^n \mathbb E(\hat y_i - y_i)^2 $. This result aligns with the principle underlying the celebrated James-Stein estimator in Empirical Bayes, wherein the mean squared error (MSE) is reduced through a bias-variance trade-off. This establishes that, in most cases, there exists a nonzero optimal $\lambda$ that guarantees error (MSE) reduction.

\subsection{Algorithm Implementation with Early Stopping}

Guided by the properties of KSD, we construct the following calibration algorithm by applying KSD gradient descent updates iteratively. We set a small step size and a maximum cap for the number of updates. 
To avoid overfitting while having an automatic stopping rule for the KSD updates, we adopt the 2-Wasserstein loss (2-WD\textsuperscript{2}) between updated predictions and the knowledge distribution as the surrogate performance metric. Specifically, after each gradient step, we calculate the 2-Wasserstein loss. Training halts when the 2-Wasserstein loss shows no improvement for several steps (specified by a {\em patience} parameter).

{\em An optional normalization step}. In practice, the distribution of the raw predicted values could be far away from the knowledge distribution due to substantial estimation biases. The KSD algorithm still offers excellent performance gain but may stop too early. For such a scenario, we apply a \textit{normalizing} step to the raw predictions to align their distribution with the mean and standard deviation of the {\em knowledge distribution}. In addition, if the distribution information includes data boundaries, we apply clipping after normalization to ensure that the predicted values remain within these limits. 

Algorithm~\ref{alg:KSD} provides the pseudocode of a distribution-informed prediction workflow with the proposed KSD calibration. The {\em calibrated predictions} are denoted as $\tilde{y}_i$, $i=1,\ldots, n$. 

\begin{algorithm}[h]
\caption{Distribution-Informed Prediction via Kernelized Stein Discrepancy (KSD) Calibration.}
\label{alg:KSD}
\begin{algorithmic}[1]
\Require Training set $\mathcal{D}_{\text{train}}=(\mathbf{X}_{\text{train}}, Y_{\text{train}})$, evaluation input $\mathbf{X}_{\text{eval}}$, predictive model $M$, knowledge distribution $p$, step size $\lambda$, patience parameter $s$.
\Ensure Raw predictions $\hat{Y}$ and Calibrated predictions $\tilde{Y}$ for $\mathbf{X}$.
\State Train predictive model $M$ using $\mathcal{D}_{\text{train}}$.
\State Compute raw predictions: $\hat{Y} = M(\mathbf{X}_{\text{eval}}; \hat\theta)$.
\State Normalize $\hat{Y}$ to match the mean, variance, and range of $p$, yielding $\tilde{y}_i$.
\State \textbf{Repeat:}
    \begin{enumerate}
        \item Calculate the KSD gradient,  $
        D(\tilde{y}_i) = \frac{\partial U_p(\tilde{y}_{1:n})}{\partial \tilde{y}_i}$, for each normalized prediction using automatic differentiation.
        \item Update each prediction:
        \[
        \tilde{y}_i \leftarrow \tilde{y}_i - \lambda \cdot D(\tilde{y}_i).
        \]
        \item Compute the 2-Wasserstein loss between $p$ and the updated predictions: $\hat{W}_2(p, \{\tilde{y}_i\}_{i=1}^n)$.
        \item Stop if $\hat{W}_2(p, \{\tilde{y}_i\}_{i=1}^n)$ does not improve for $s$ consecutive steps.
    \end{enumerate}

\State \textbf{return} $\hat{Y} = \{\hat{y}_{i}, i=1, \ldots, n\}$, and $\tilde{Y} = \{\tilde{y}_{i}, i=1, \ldots, n\}$.

\end{algorithmic}
\end{algorithm}

\section{Experiments}
In this section, we illustrate the proposed calibration approach using one toy example, an application to the estimation of air-sea CO$_2$ fluxes using an ML-based reconstruction from surface ocean observations of the partial pressure of CO$_2$ (pCO\textsubscript{2}) \citep{gloege2021quantifying}, and an application to an online hybrid emulator of quasi-geostrophic turbulence \citep{ross2023benchmarking}.

\subsection{Experiment Setup}\label{sec:setup}

For our numerical experiments, we implement Algorithm~\ref{alg:KSD}.
The predictive model is trained on training data using the MSE loss function. As previously mentioned, the KSD calibration does not require training but uses a knowledge distribution. We evaluate the workflow on an independent evaluation dataset, comparing raw and calibrated predictions against the ground truth $Y$ values using both point-wise loss, evaluated by mean square error (MSE), and distribution loss, evaluated by 2-Wasserstein Loss (2-WD\textsuperscript{2}). As discussed previously, rather than creating a trade-off between point-wise prediction accuracy and distributional alignment, the KSD calibration is expected to improves both MSE and 2-WD\textsuperscript{2}. 

\subsubsection{Choice of kernel}

For all our experiments, we use the radial basis function (RBF) kernel for calculating KSD. \citet{liu2016kernelized} demonstrated that if the kernel belongs to the Stein class of distribution $p$, 
    \[
        S(p, q) = \mathbb{E}_{x, x' \sim p} [u_q(x, x')],
    \]
and the RBF kernel is in the Stein class for smooth densities supported on $\mathbb R$. 
We use a bandwidth of three standard deviations of the data. Additional experiments (not shown) were conducted using different bandwidth values and obtained similar results. 

\subsubsection{Step size and patience parameter}

The magnitude of the gradient with respect to KSD varies in practice. In our experiments, we scale the gradients so that their second moment matches the variance of the predictions. This allows us to regulate how much each gradient descent step impacts the predictions. After this scaling step, we choose $\lambda=0.01$ so that each step leads to a small update. We set the patience parameter to $s=20$, allowing sufficient steps to confirm the absence of further improvements. Numerical experiments show that our results are not sensitive to the choice of $s$.

\subsection{Toy Example: Shifted Linear Dynamical Systems}

We consider a toy linear dynamical system:
\[
\mathbf{X}_t = A \mathbf{X}_{t-1} + \boldsymbol{\varepsilon}_t
\]
where $\mathbf{X}_t \in \mathbb{R}^6$ is the state vector at time step $t$, $A \in \mathbb{R}^{6 \times 6}$ is the transition matrix that governs the linear dynamics of the system, and $\boldsymbol{\varepsilon}_t \sim \mathcal{N}(0, \Sigma)$ represents Gaussian noise at each time step, with zero mean and covariance matrix $\Sigma$. The matrix $A$ is sampled from a standard Gaussian distribution and scaled such that its largest eigenvalue $\lambda_{\max}$ is $0.99$, resulting in a system that is nearly chaotic yet stable.

To simulate distributional shifts, we generate training and testing datasets using the same transition matrix $A$ but different noise covariance matrices $\Sigma$, leading to distinct steady-state distributions. Specifically, the noise covariance $\Sigma$ is constructed by first generating a lower triangular matrix $L$ with entries drawn from a normal distribution with mean $0$ and standard deviation $100$, then computing $\Sigma = L L^T$ to ensure positive definiteness. This procedure is carried out independently for both the training and testing datasets, resulting in distinct covariance structures for each.

\subsubsection{Dataset Generation}

We generate datasets with sizes ranging from 100 to 1,000. To construct each dataset, we first initialize the system at $\mathbf{X}_0$ drawn from its stationary distribution. The sequence was simulated forward and subsampled using fixed intervals—\texttt{interval}, set to 10 unless otherwise specified. This provides a realistic setting where training and testing data differ not in local dynamics but in their marginal distributions.

To initialize the system, the initial state \(\mathbf{X}_0\) is drawn from the steady-state distribution. After simulating a long sequence of state transitions, every \texttt{interval} steps are selected to form the training dataset or the test dataset, ensuring the samples are sufficiently spaced to avoid redundancy.

\subsubsection{Machine Learning Model and Knowledge Distribution}

The goal is to predict the norm (length) of the future state vector, defined as \( Y_t = \|\mathbf{X}_{t+1}\|_2 / 100 \). To predict \(Y_t\) based on \(\mathbf{X}_t\), we employed LightGBM \cite{ke2017lightgbm} as the machine learning model. The model’s hyperparameters were automatically tuned using the Optuna framework. The training data was randomly split, with 80\% used for training and 20\% for validation, and the validation set guided the selection of the hyperparameter.

To estimate knowledge marginal distribution, we sampled 10,000 state vectors \(\mathbf{X}\) from the steady-state distribution of the testing dataset, propagate them one step, and collect the corresponding $Y$ values. These 10,000 \(Y\) values were then used to fit a Gaussian Mixture Model (GMM) with 5 components, which provided an estimate of the marginal distribution of \(Y\) in the testing dataset.

\subsubsection{Distributional Shift}

The primary challenge in this experiment stemmed from the differences in steady-state distributions between the training and testing datasets, caused by the distinct covariance matrices. This mismatch led to a systematic bias in the raw predictions of the norm of \(\mathbf{X}_{t+1}\), as illustrated in Figure~\ref{fig:dynamic}. The bias manifested as a distributional shift between the predicted and true values of the testing dataset.

\begin{figure}[htbp!]
    \centering
    \begin{tabular}{cc}
        {\small (A) Biased Predictions} & {\small (B) Distribution Calibration} \\
        \includegraphics[width=0.4\linewidth]{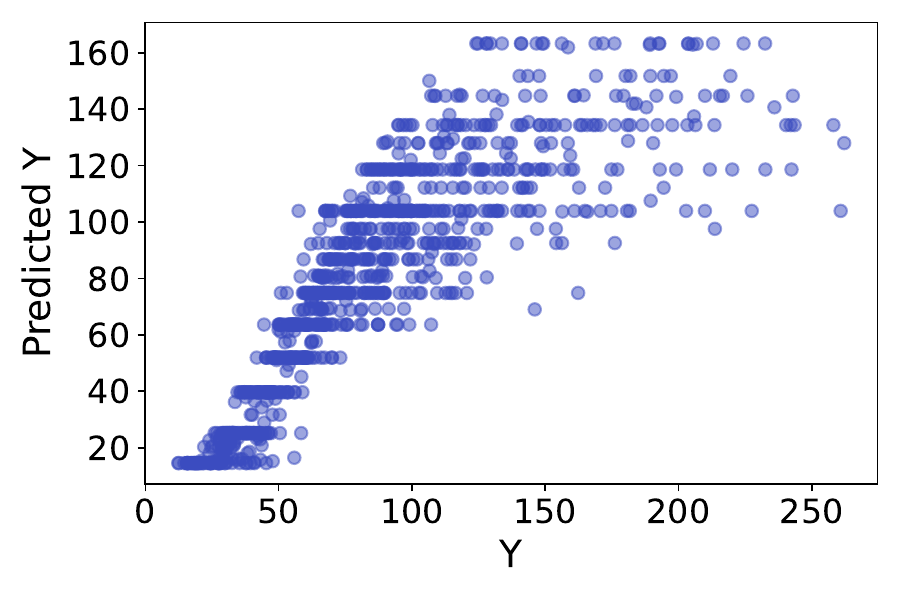}&
        \includegraphics[width=0.4\linewidth]{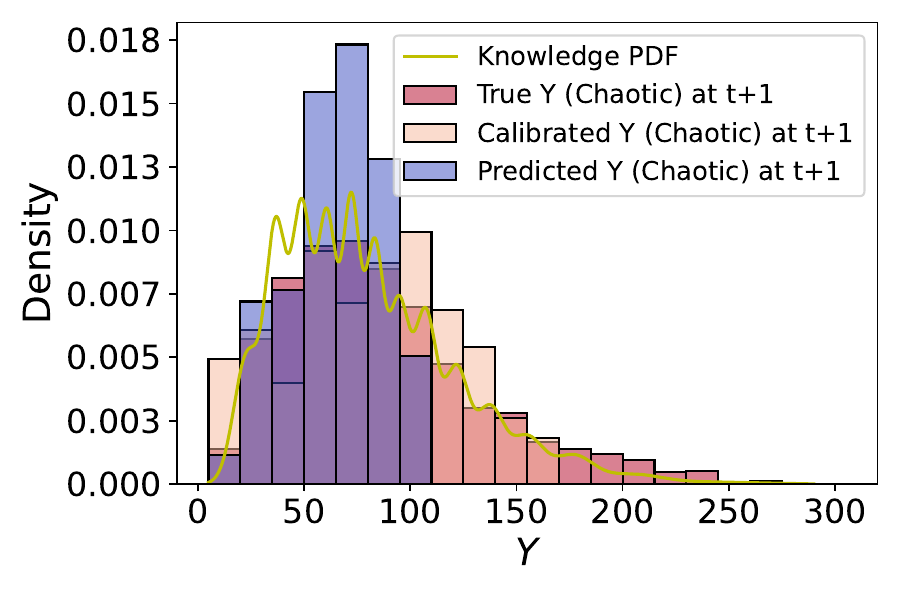} \\
        \multicolumn{2}{c}{
            (C) {\small MSE Reduction}
        } \\
        \multicolumn{2}{c}{
            \includegraphics[width=0.66\linewidth]{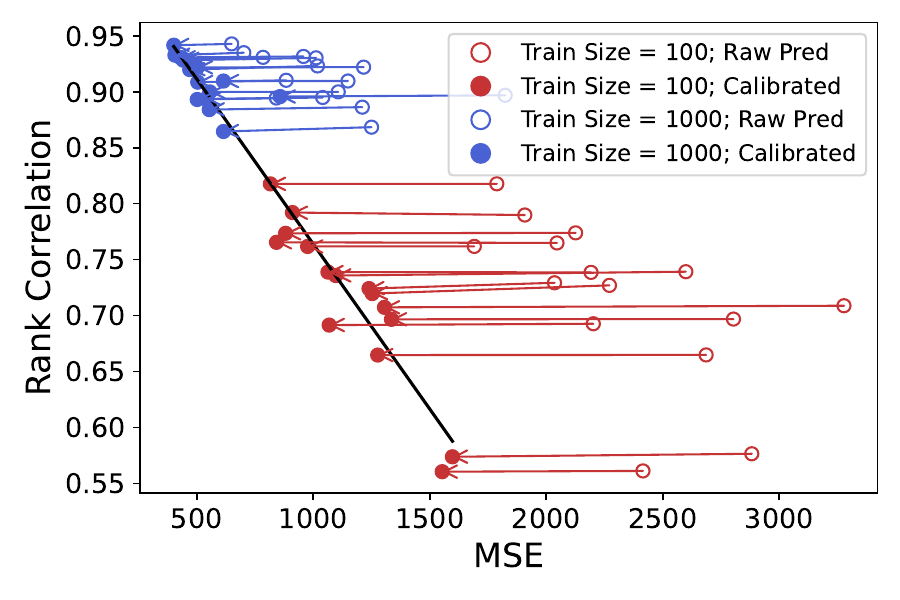}
        }
    \end{tabular}
    \vspace{-0.1cm}
    \caption{\textbf{KSD calibration under steady-state distribution shift.}
    The dynamical system is trained and evaluated under different steady-state distributions, inducing systematic bias in the raw predictions.
    (A) True versus raw predicted values of \(Y\) on the test set: the model captures the local dependence between \(\mathbf{X}_t\) and \(Y\), but the point cloud deviates from the identity line, revealing systematic bias driven by the distributional shift. 
    (B) One-dimensional marginal densities for the knowledge distribution, the true test responses, the raw predictions, and the calibrated predictions. The raw predictive distribution is shifted relative to both the knowledge and true test distributions, whereas KSD calibration transports it closer to these targets, improving distributional alignment. 
    (C) For each independent run, we plot the MSE against the rank correlation between predictions and ground truth. Hollow markers denote raw predictions and filled markers denote calibrated predictions from the same run; red (blue) points use a training size of 100 (1000). Calibration preserves rank correlation but consistently reduces MSE, and the linear fit over all calibrated points shows that post-calibration accuracy improves approximately linearly with the model’s rank correlation.}
    \vspace{-0.1cm}
    \label{fig:dynamic}
\end{figure}

Figure~\ref{fig:dynamic}-(A) plots the true values of \(Y\) against the corresponding raw predictions on the test set. The points concentrate along an increasing curve, indicating that the machine learning model successfully captures the short-term local dependence between \(\mathbf{X}_t\) and \(Y\). However, the cloud of points systematically deviates from the identity line: for example, large true values of \(Y\) tend to be under-estimated, reflecting a structural bias induced by the mismatch between the training and test steady-state distributions rather than pure noise in the regression. This panel therefore highlights that the model is predictive but is biased under distributional shift.

Figure~\ref{fig:dynamic}-(B) focuses on the one-dimensional marginal distributions. It overlays the knowledge distribution, the empirical distribution of the true test responses, and the distributions of the raw and calibrated predictions. The raw predictive distribution is compressed toward the left and carries almost no mass in the right tail, failing to cover a substantial portion of the support where the true and knowledge distributions place non-negligible probability. In other words, the raw model rarely generates large values of \(Y\) even though such values occur frequently under the true dynamics. After applying KSD calibration, the distribution of the calibrated predictions extends into this missing region and moves much closer to the knowledge and true test distributions, yielding substantially better alignment of the predictive marginal.


Each point in Figure~\ref{fig:dynamic}-(C) corresponds to one independent experimental run, where the training and test data are generated with a different random seed and a separate LightGBM model is trained. For each run, the horizontal axis reports the MSE and the vertical axis reports the rank correlation between the model predictions and the ground truth. Hollow markers show the performance of the raw LightGBM predictions, while the corresponding filled markers at the same vertical level show the performance after applying KSD-based calibration to that specific model: because the calibration only adjusts the predictive distribution without changing the relative ordering of predictions, the rank correlation is preserved and any gain appears purely as a horizontal shift to the left (lower MSE). The red points are obtained with a training set of size 100 and the blue points with a training set of size 1000, indicating that larger training sets yield base models with both lower raw MSE and higher rank correlation. Nonetheless, even these stronger base models still benefit from calibration, as seen from the consistent leftward shifts of the filled points. The black line is a linear fit over all filled markers, showing that the achievable post-calibration accuracy is approximately a linear function of the underlying model’s rank correlation: models with better ranking structure systematically obtain larger improvements from our calibration procedure.

\subsubsection{Results}

In this toy example, models trained on biased data have poor MSE but moderate rank correlations with the ground truth on the test data. This suggests that the predictions were {\em systematically} biased and could be calibrated to improve MSE. In such contexts, model training using a hybrid loss (e.g., MSE+Distribution Loss) with the training data would fail due to the disconnect between the marginal outcome distribution of the training data and the knowledge distribution. Our proposed KSD calibration offers an efficient alternative solution.

The results summarized in Table~\ref{tab:toy_example2_appendix} demonstrate a clear improvement in both prediction accuracy and distribution alignment due to the proposed calibration method. Overall, the MSE decreases by 40\%-50\%, indicating a significant enhancement in point-wise prediction accuracy. Furthermore, the 2-WD\textsuperscript{2}, which quantifies the difference in distributions between the predicted and actual lengths of \(\mathbf{X}_{t+1}\), is reduced to between one-third and one-eleventh of its original value.

\begin{table}[h]
\caption{\textbf{Effectiveness of KSD calibration under steady-state distribution shift.} 
MSE and 2-WD\textsuperscript{2} for different combinations of training and evaluation dataset sizes, comparing raw and calibrated predictions when the training and testing datasets are generated from different covariance matrices so their stationary distributions differ. Here, \textit{raw} refers to the predictive model's original output, and \textit{cali w/ norm} indicates calibration based on normalized predictions. Across all $(n_{\text{train}}, n_{\text{test}})$ pairs, KSD calibration substantially reduces both MSE and 2-WD\textsuperscript{2}, illustrating its ability to correct biases induced by steady-state distribution shifts. The mean and standard deviation of each metric are computed from $1000$ experiments.}
%
\label{tab:toy_example2_appendix}
\centering
\begin{tabular}{cc cc cc}
\hline
\multirow{2}{*}{$\mathbf{n_{train}}$} & \multirow{2}{*}{$\mathbf{n_{test}}$} & \multicolumn{2}{c}{\textbf{raw}} & \multicolumn{2}{c}{\textbf{cali w/ norm}} \\
 & & \textbf{MSE} & \textbf{2-WD\textsuperscript{2}} & \textbf{MSE} & \textbf{2-WD\textsuperscript{2}} \\
\hline
\multirow{2}{*}{100} & 100 & 2191.3 (1092.5) & 1832.3 (1014.3) & 1207.2 (487.7) & 292.8 (225.7) \\
& 1000 & 2257.8 (592.8) & 1916.9 (588.4) & 1160.2 (308.2) & 177.6 (93.4) \\
\hline
\multirow{2}{*}{1000} & 100 & 1035.3 (630.9) & 811.0 (558.1) & 614.2 (284.7) & 259.6 (193.3) \\
& 1000 & 1070.8 (279.0) & 861.2 (264.0) & 534.3 (103.5) & 158.5 (61.9) \\
\hline
\end{tabular}
\end{table}

We next isolate the role of steady-state distribution shift in driving the gains from KSD calibration. Table~\ref{tab:toy_example2_appendix} reports results for the toy example when the training and testing datasets are generated from different covariance matrices, so that their stationary distributions differ. In all $(n_{\text{train}}, n_{\text{test}})$ combinations, the calibrated predictions achieve substantially lower MSE and 2-WD\textsuperscript{2} than the raw model outputs, indicating that KSD calibration effectively corrects the bias induced by the mismatch between the training and test steady-state distributions.

To test whether these gains persist in the absence of distributional shift, Table~\ref{tab:toy_example2_appendix2} repeats the experiments under the same settings but with $\Sigma_{\text{train}} = \Sigma_{\text{test}}$, so that the training and testing datasets share the same stationary distribution. In this well-specified setting, calibration does not systematically improve performance: the MSE is often worse after calibration, and the reductions in 2-WD\textsuperscript{2} are small or appear only in a single configuration. Taken together, Table~\ref{tab:toy_example2_appendix} and Table~\ref{tab:toy_example2_appendix2} show that KSD calibration is most beneficial when there is a steady-state distribution shift to be corrected, rather than as a generic procedure that always shrinks prediction error.


\begin{table}[H]
\caption{\textbf{KSD calibration when there is no steady-state distribution shift.}
MSE and 2-WD\textsuperscript{2} for different combinations of training and evaluation dataset sizes when the training and testing datasets share the same covariance matrix, so that the stationary distribution remains consistent between training and testing. Here, \textit{raw} refers to the model's original output, and \textit{cali w/ norm} indicates calibration based on normalized predictions. In this well-specified setting, KSD calibration does \emph{not} consistently improve performance: MSE is often larger after calibration and 2-WD\textsuperscript{2} improves only marginally or in a single configuration. The mean and standard deviation of each metric are computed from $1000$ experiments. Together with Table~\ref{tab:toy_example2_appendix}, this highlights that the main benefit of KSD calibration arises when correcting steady-state distribution shifts, rather than in settings without such shifts.}
%
\label{tab:toy_example2_appendix2}
\centering
\begin{tabular}{cc cc cc}
\hline
\multirow{2}{*}{$\mathbf{n_{train}}$} & \multirow{2}{*}{$\mathbf{n_{test}}$} & \multicolumn{2}{c}{\textbf{raw}} & \multicolumn{2}{c}{\textbf{cali w/ norm}} \\
 & & \textbf{MSE} & \textbf{2-WD\textsuperscript{2}} & \textbf{MSE} & \textbf{2-WD\textsuperscript{2}} \\
\hline
\multirow{2}{*}{100} & 100 & 82.11 (52.96) & 30.13 (35.96) & 101.54 (53.33) & 33.07 (34.57) \\
& 1000 & 83.83 (31.33) & 26.93 (21.90) & 83.59 (27.62) & 9.64 (6.54) \\
\hline
\multirow{2}{*}{1000} & 100 & 16.26 (7.72) & 3.50 (4.20) & 42.46 (35.80) & 27.83 (33.55) \\
& 1000 & 17.08 (4.14) & 2.00 (1.95) & 20.17 (5.65) & 4.34 (4.36) \\
\hline
\end{tabular}
\end{table}

\subsection{Application to air-sea CO\textsubscript{2} flux}

\subsubsection{Background and Motivation}

We evaluate the proposed calibration's skill in improving the reconstruction of air-sea CO\textsubscript{2} fluxes based on scarce (1\% coverage), highly biased sample data \citep{socat1,socat2}. See the Appendix~\ref{pco2appendix} for a full discussion on these data issues. This level of data scarcity and sampling biases makes it infeasible to consider the hybrid loss, as discussed in {\bf Sec.}~\ref{sec:ml4ds}.

Understanding air-sea CO\textsubscript{2} fluxes is crucial for quantifying the ocean's role in the global carbon cycles.
Predicting air-sea CO\textsubscript{2} fluxes presents significant challenges due to the complex interplay of fphysical, chemical, and biological processes that govern the exchange of carbon dioxide between the atmosphere and the ocean. Factors such as ocean circulation, temperature, salinity, wind patterns, and biological activity all influence the fluxes, making accurate predictions difficult. Additionally, the spatial and temporal variability of these factors requires high-resolution data and sophisticated modeling techniques to capture the dynamics accurately.

\citet{landschutzer2016decadal} introduced a widely used SOM-FFN (Self-Organizing Map-Feed-Forward Network) to estimate the continuous monthly mean pCO\textsubscript{2}. \citet{gloege2021quantifying} used multiple Large Ensemble Earth system models \citep{gloege_2019} to evaluate the prediction ability of SOM-FFN. The Earth system models provided reanalysis data, allowing global coverage of the ground truth of pCO\textsubscript{2}. Applying the same sampling process as in the real observation data, the authors recreated the same level of data scarcity and sampling biases in the training set. They found that areas that lacked observation data had larger prediction biases/errors.

\subsubsection{Experiment Setup}

Following the setup of \citep{gloege2021quantifying}, we apply the proposed KSD calibration to SOM-FFN predictions using reanalysis data from the Community Earth System Model–Large Ensemble (RCP8.5) \citep{kay2015community}, where the global ground truth of pCO\textsubscript{2} is known. 
For our experiments, we treat sampled data from one ensemble member (CESM001) as ``real Earth'' data and focus on the 6th biogeochemical province identified by the SOM, which consists of disjoint regions in both the Northern and Southern Hemispheres and is characterized by highly heterogeneous and spatially structured data availability (Figure~\ref{fig:obv_pro15_com}).

\subsubsection{Knowledge Distribution}

\begin{figure}[h!]
    \centering
    \begin{tabular}{ccc}
        \subcaptionbox{January\label{fig:jan_density}}{\includegraphics[width=0.31\textwidth]{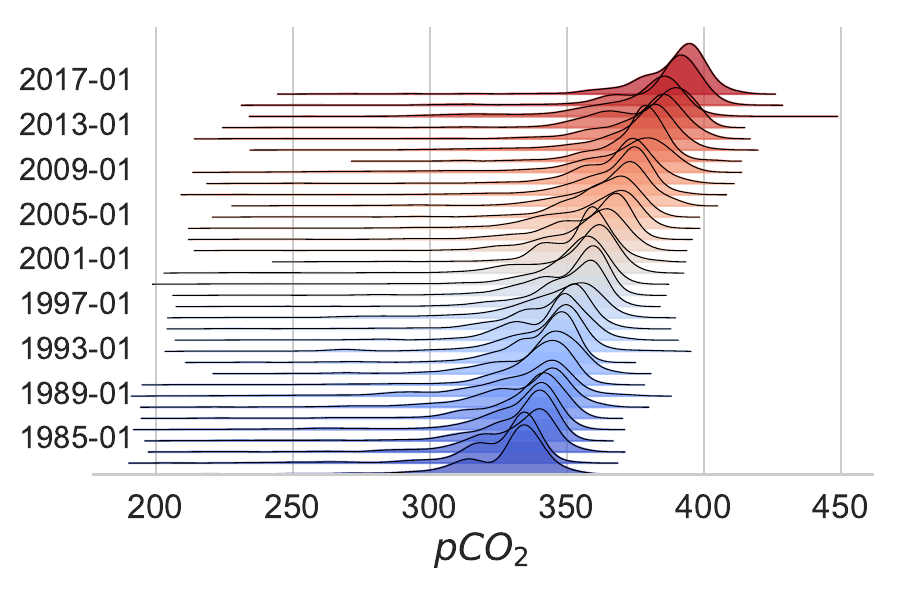}} &
        \subcaptionbox{February\label{fig:feb_density}}{\includegraphics[width=0.31\textwidth]{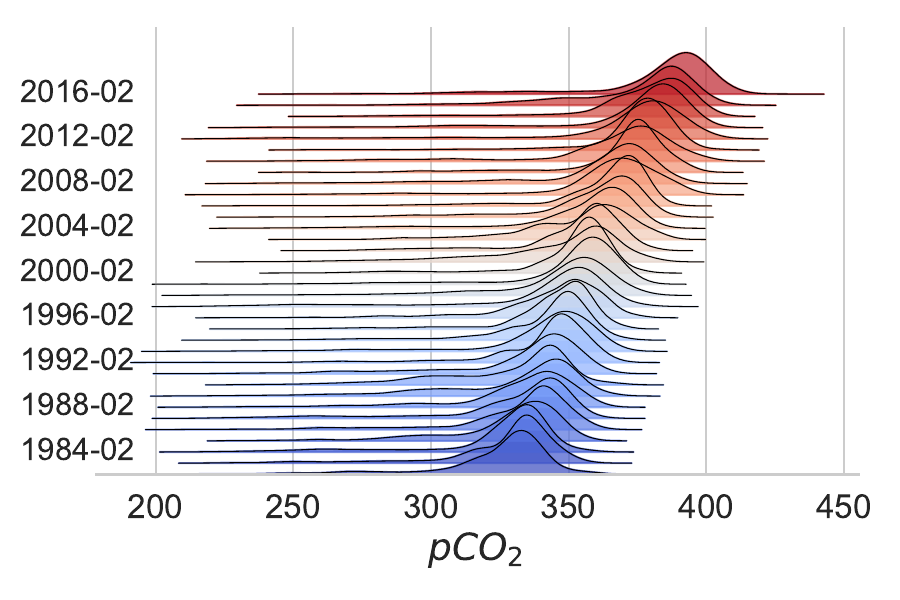}} &
        \subcaptionbox{March\label{fig:mar_density}}{\includegraphics[width=0.31\textwidth]{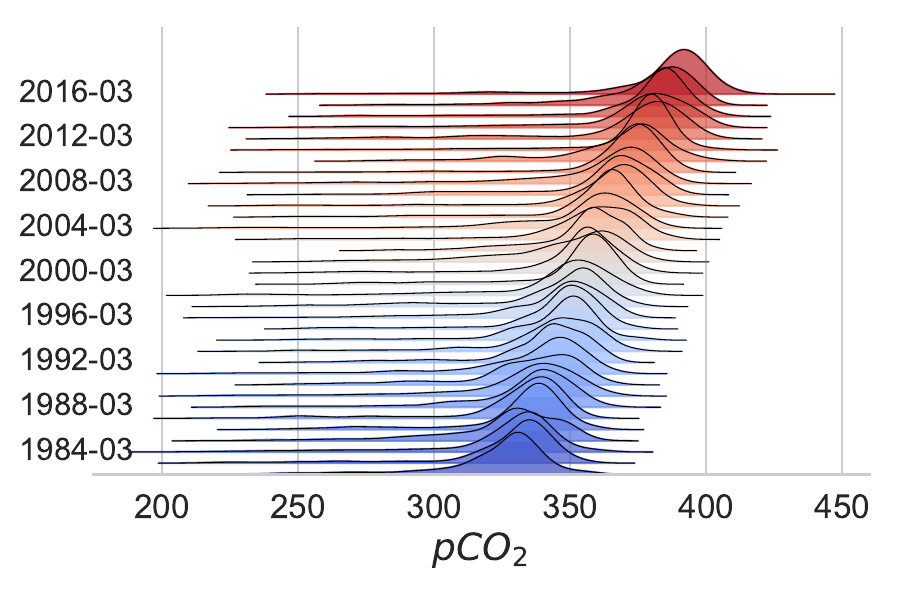}} \\
        \subcaptionbox{April\label{fig:apr_density}}{\includegraphics[width=0.31\textwidth]{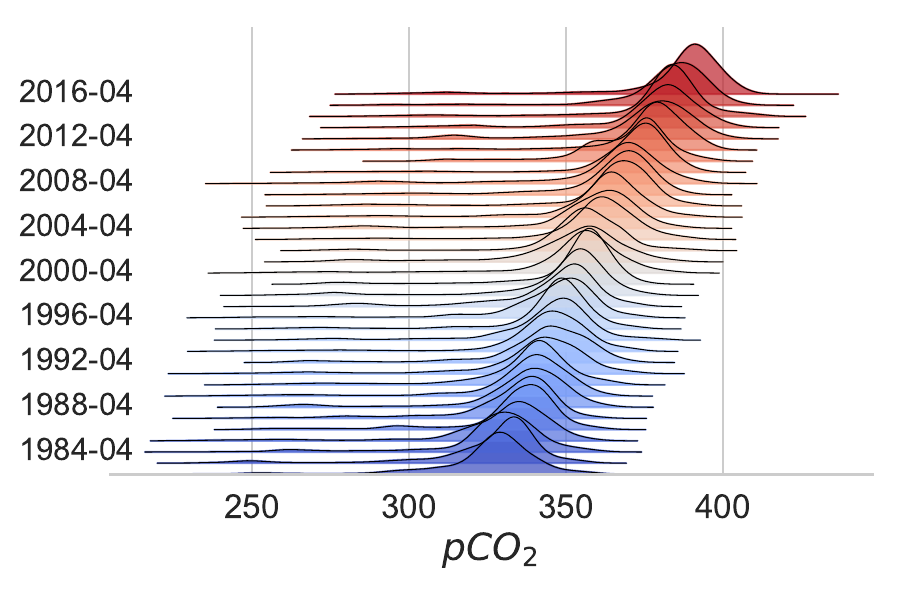}} &
        \subcaptionbox{May\label{fig:may_density}}{\includegraphics[width=0.31\textwidth]{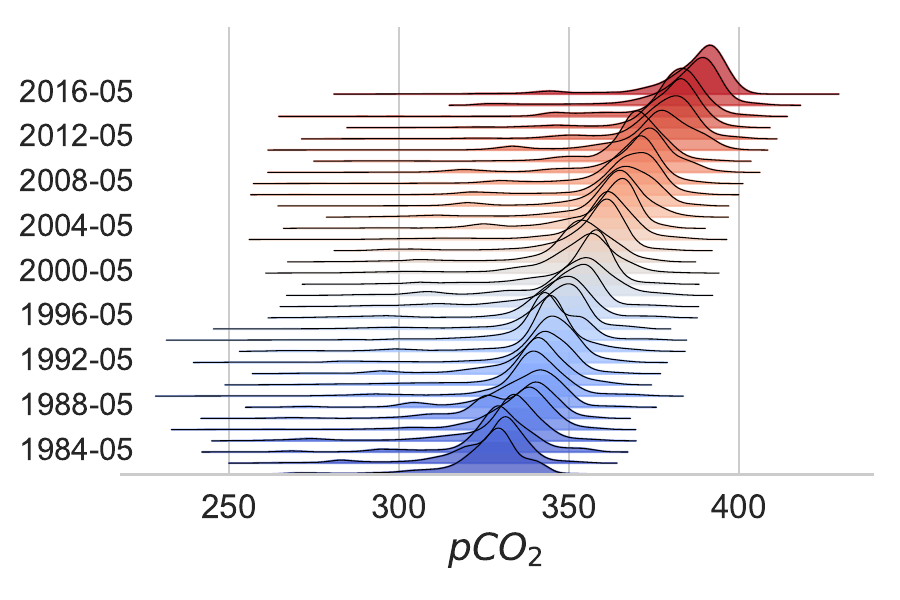}} &
        \subcaptionbox{June\label{fig:jun_density}}{\includegraphics[width=0.31\textwidth]{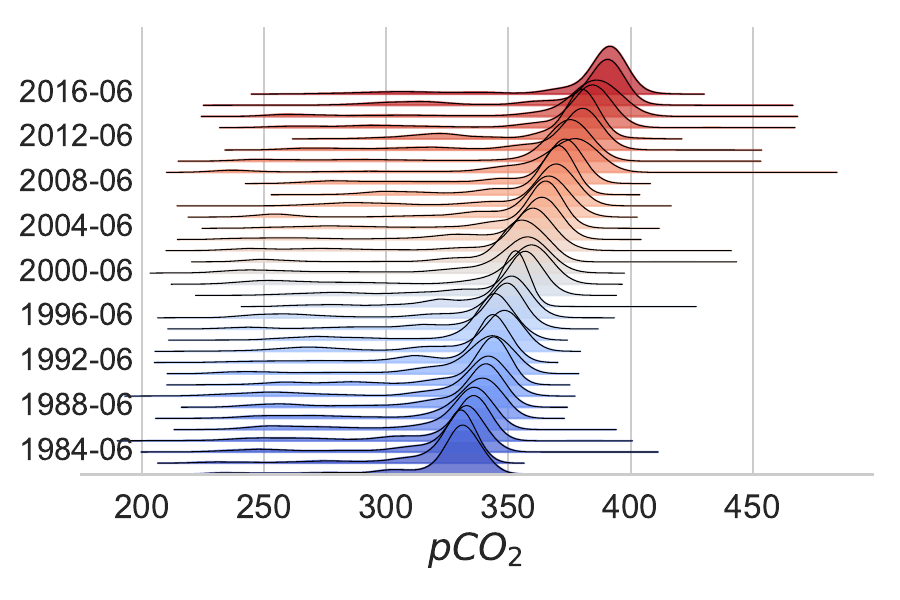}} \\
        \subcaptionbox{July\label{fig:jul_density}}{\includegraphics[width=0.31\textwidth]{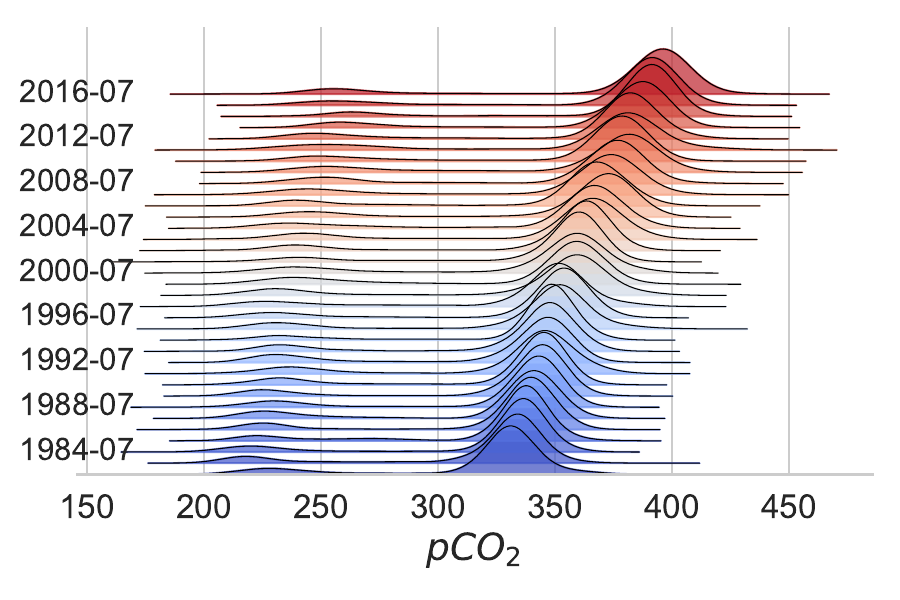}} &
        \subcaptionbox{August\label{fig:aug_density}}{\includegraphics[width=0.31\textwidth]{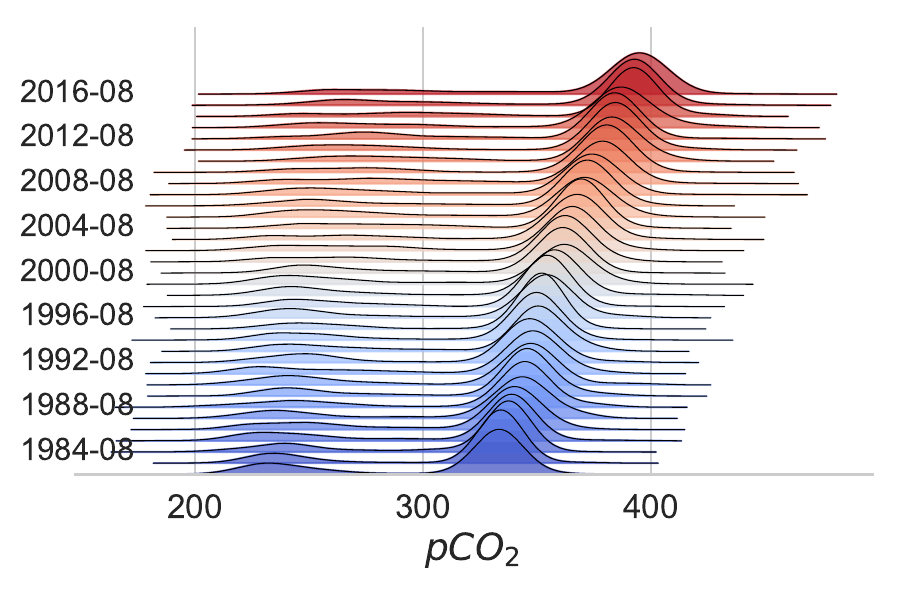}} &
        \subcaptionbox{September\label{fig:sep_density}}{\includegraphics[width=0.31\textwidth]{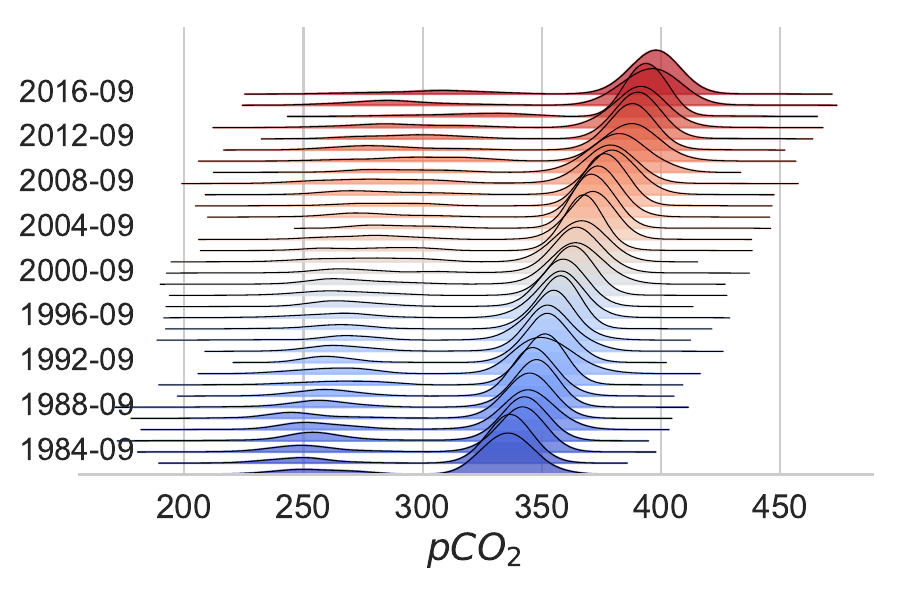}} \\
        \subcaptionbox{October\label{fig:oct_density}}{\includegraphics[width=0.31\textwidth]{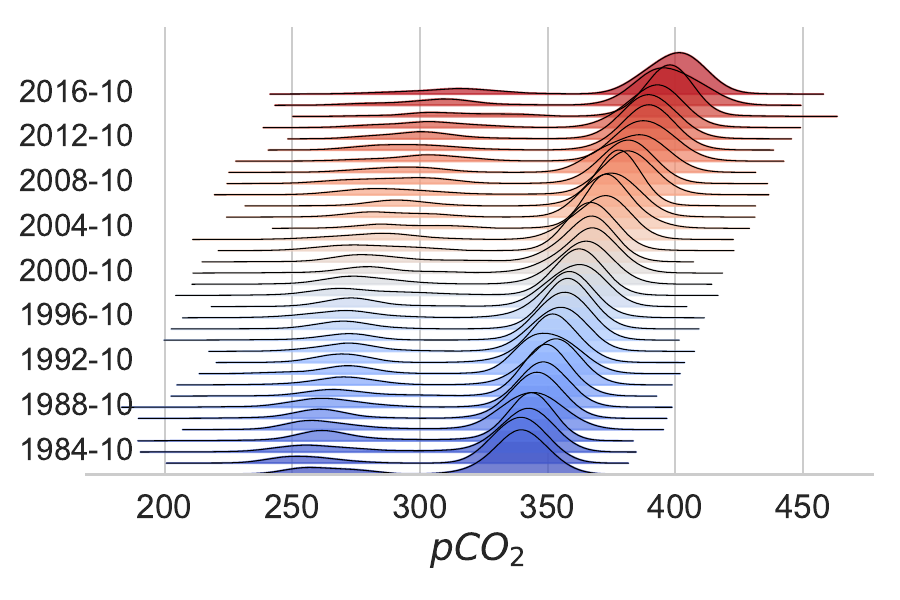}} &
        \subcaptionbox{November\label{fig:nov_density}}{\includegraphics[width=0.31\textwidth]{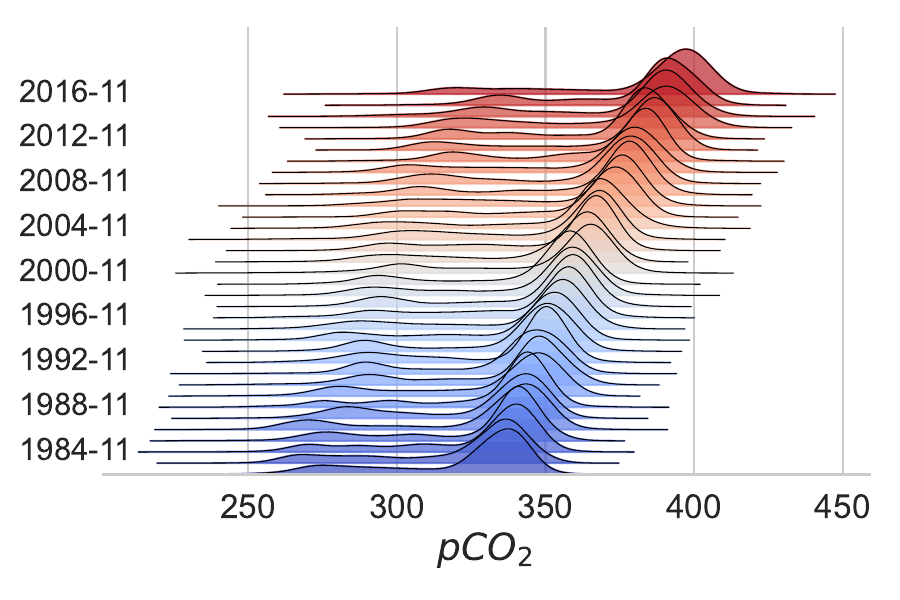}} &
        \subcaptionbox{December\label{fig:dec_density}}{\includegraphics[width=0.31\textwidth]{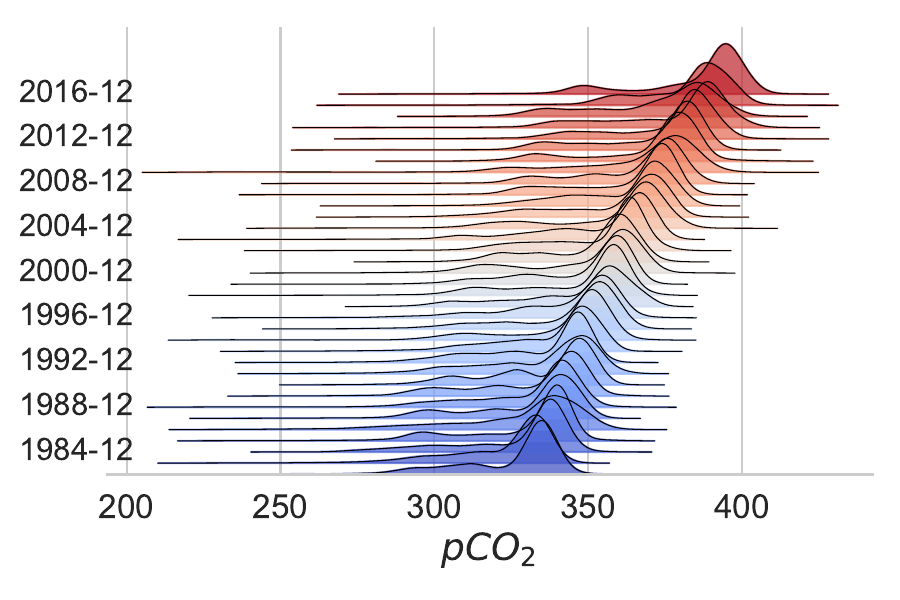}} \\
    \end{tabular}
    
    \caption{Monthly knowledge distributions of pCO\textsubscript{2} of the 6th province from CESM002. Each panel corresponds to a calendar month, and each horizontal ridge shows the marginal distribution of pCO\textsubscript{2} in a given year for that month over 1982--2016, with the vertical axis indicating year. The monthly distributions exhibit strong seasonal structure: while several winter and spring months appear approximately unimodal, late-summer and autumn months (July--November) display a clearer bimodal pattern, with distinct low and high pCO\textsubscript{2} modes. For any fixed month, the ridges drift gradually to the right over time, reflecting the long-term increase in pCO\textsubscript{2}, yet the overall shape of the distribution remains relatively stable across years. These patterns motivate the use of CESM002 as a knowledge source for KSD calibration: the model provides month-specific marginal distributions that encode robust seasonal and multimodal structure and long-term trends, which can be used to correct systematic biases in data-driven predictions trained on sparse and biased observations.}

    \label{fig:pco2_monthly_distribution}
\end{figure}

To assess the utility of KSD calibration using prior knowledge of the marginal outcome distribution, we view climate system models as providing approximate but structurally realistic priors for province--level pCO\textsubscript{2}. In our main experiments, we extract \emph{knowledge distributions} from a different ensemble member, CESM002, which is generated by the same CESM RCP8.5 configuration as CESM001 but with different initial values and boundary conditions. As a result, CESM001 and CESM002 share nearly identical seasonal and distributional structure, while remaining distinct realizations of the underlying climate system. In order to get the knowledge distribution, for each calendar month, we collect pCO\textsubscript{2} values of the 6th province from CESM002 over 2010-2016 and fit a Gaussian Mixture Model (GMM) with 5 components. The resulting month-specific mixtures provide flexible summaries of the marginal distributions, capturing both multi-modality and tail behavior.


Figure~\ref{fig:pco2_monthly_distribution} visualizes these month-specific knowledge distributions. Each panel corresponds to a calendar month, and each horizontal ridge within a panel represents the marginal pCO\textsubscript{2} distribution in a given year of the CESM002 time series for that month. Two robust features are apparent. First, different months exhibit distinct distributional shapes: while many winter and spring months look approximately unimodal, late-summer and autumn months (July--December) display a clearer bimodal structure, with both low and high pCO\textsubscript{2} modes present. Second, for any fixed month, the ridges shift gradually to the right over 1982--2016, reflecting the long-term increase in pCO\textsubscript{2}, but the overall shape of the distribution (for example, the number and relative prominence of modes) remains relatively stable from year to year. In this sense, the CESM-based knowledge distributions encode both a stable, month-specific seasonal structure and a slowly varying trend that are difficult to infer reliably from sparse and biased SOCAT observations alone. These are precisely the kinds of distributional features that our KSD-based calibration can exploit, because the KSD objective compares the full predictive distribution to the knowledge distribution and is sensitive to discrepancies in shape (such as missing modes or incorrect tail behavior), not just in the mean. By calibrating month by month, we explicitly respect the strong seasonal differences in the pCO\textsubscript{2} marginals instead of imposing a single global prior. Moreover, the knowledge distributions only need to be structurally correct rather than exact: other climate models, such as GFDL, produce monthly pCO\textsubscript{2} distributions with similar seasonal and multimodal structure, and could in principle be used in the same way to calibrate CESM001 or other targets.

We use these knowledge distributions from CESM002 to calibrate the raw predictions for CESM001 from 2010-2016, based on FFN trained on sparse data with sampling bias from 1982-2009. The calibration is repeated for each month, using the corresponding marginal distribution of the month. See the Appendix~\ref{pco2appendix} for the full implementation details of this experiment.

\subsubsection{Results}

\begin{table}[h]
\vspace{-.3cm}
    \caption{Evaluation of raw, normalized, and KSD-calibrated predictions of pCO\textsubscript{2} in the 6th province. The table reports the mean squared error (MSE), which measures point-wise prediction accuracy, and the 2-WD\textsuperscript{2}, which quantifies the discrepancy between the predicted and true marginal distributions of pCO\textsubscript{2}. The \textit{raw} column corresponds to the original FFN predictions trained on sparse and biased SOCAT observations; the \textit{normalized} column applies a simple rescaling step from {\bf Sec.}~\ref{sec:calibration} to put the outputs on a common marginal scale; and the \textit{calibrated} column applies our KSD-based calibration on top of this baseline. The KSD-calibrated model achieves the lowest MSE and 2-WD\textsuperscript{2}, with particularly large relative gains in 2-WD\textsuperscript{2}, showing that our KSD procedure is key to improving both point-wise accuracy and the alignment of the predictive distribution with the true pCO\textsubscript{2} distribution.}
\label{tab:pco2_calibration}
    \centering
    \begin{tabular}{cccccc}\hline
        \textbf{metric} & \textbf{raw} & \textbf{normalized} & \textbf{calibrated} \\\hline
        \textbf{MSE} & 703.7 & 421.8 & 400.8 \\
        \textbf{2-WD\textsuperscript{2}} & 411.3 & 31.0 & 23.6 \\\hline
    \end{tabular}
\vspace{-.3cm}
\end{table}

The overall performance is summarized in Table~\ref{tab:pco2_calibration}, which reports both the mean squared error (MSE) and the 2-WD\textsuperscript{2} between the predicted and true pCO\textsubscript{2} distributions. We compare three versions of the FFN outputs: the \textit{raw} predictions, a \textit{normalized} version obtained by a simple rescaling step in {\bf Sec.}~\ref{sec:calibration}, and the \textit{calibrated} predictions after applying our KSD-based distributional calibration using the CESM002 knowledge distributions. Relative to the raw model, the KSD-calibrated predictions reduce the MSE from 703.7 to 400.8 (about a 43\% reduction) and the 2-WD\textsuperscript{2} from 411.3 to 23.6 (about a 94\% reduction), indicating substantial gains in both point-wise accuracy and distributional fidelity. Even when starting from the normalized baseline, which mainly serves to place the FFN outputs on a common scale, the KSD step further decreases 2-WD\textsuperscript{2} from 31.0 to 23.6. This additional reduction highlights that KSD calibration is crucial for correcting higher-order distributional features—beyond simple rescaling—and for bringing the predictive marginal much closer to the true pCO\textsubscript{2} distribution.

\begin{figure}[htbp!]
    \centering
    \begin{tabular}{ccc}
        \subcaptionbox{January\label{fig:jan_cali}}{\includegraphics[width=0.31\textwidth]{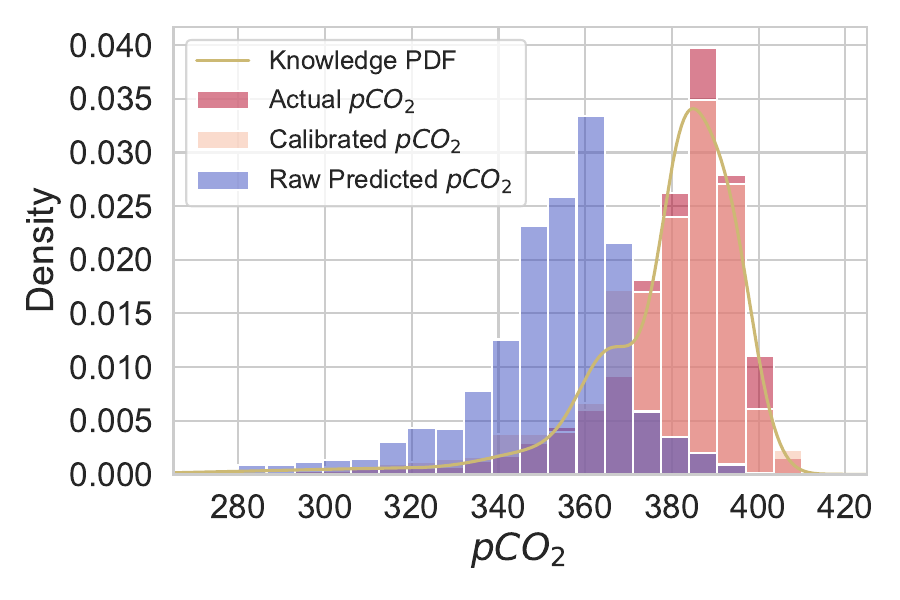}} &
        \subcaptionbox{February\label{fig:feb_cali}}{\includegraphics[width=0.31\textwidth]{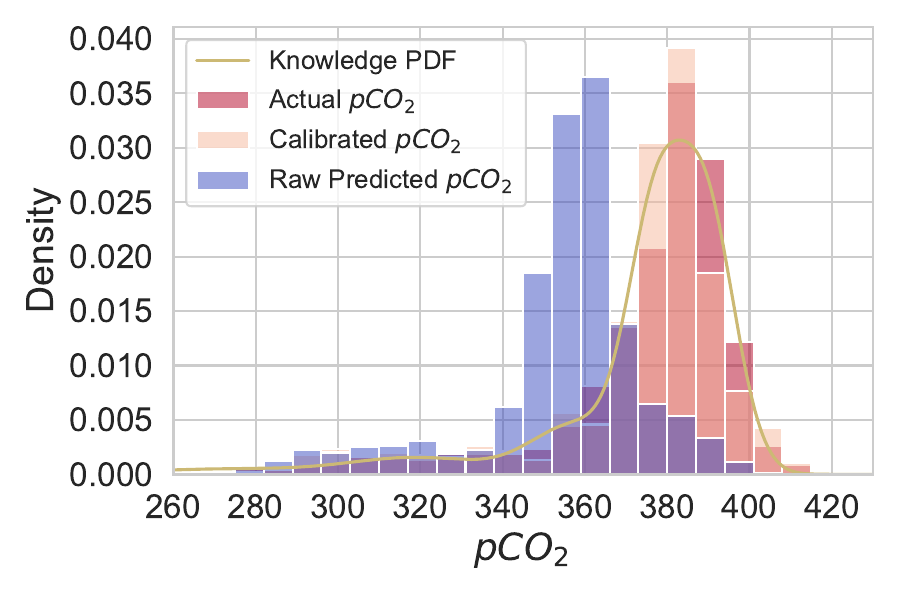}} &
        \subcaptionbox{March\label{fig:mar_cali}}{\includegraphics[width=0.31\textwidth]{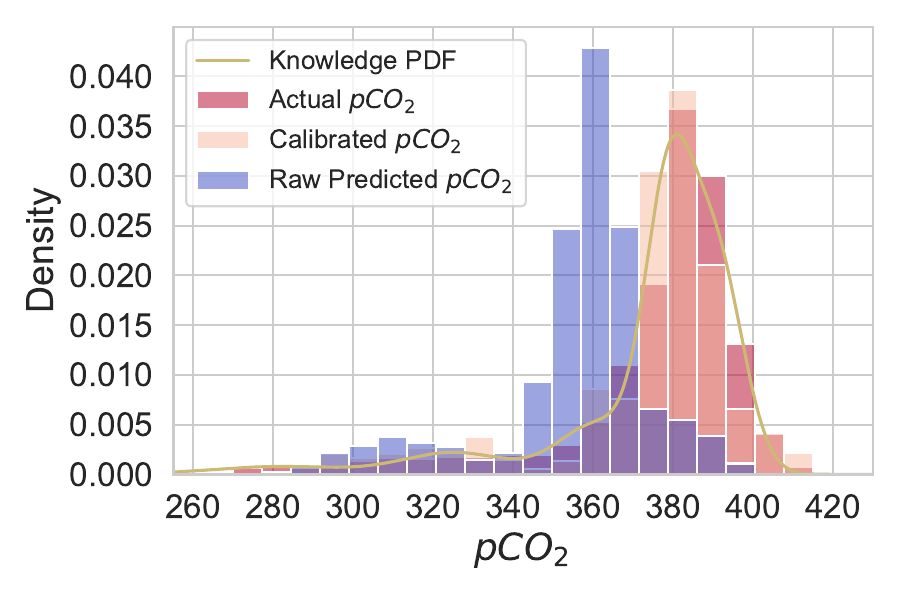}} \\
        \subcaptionbox{April\label{fig:apr_cali}}{\includegraphics[width=0.31\textwidth]{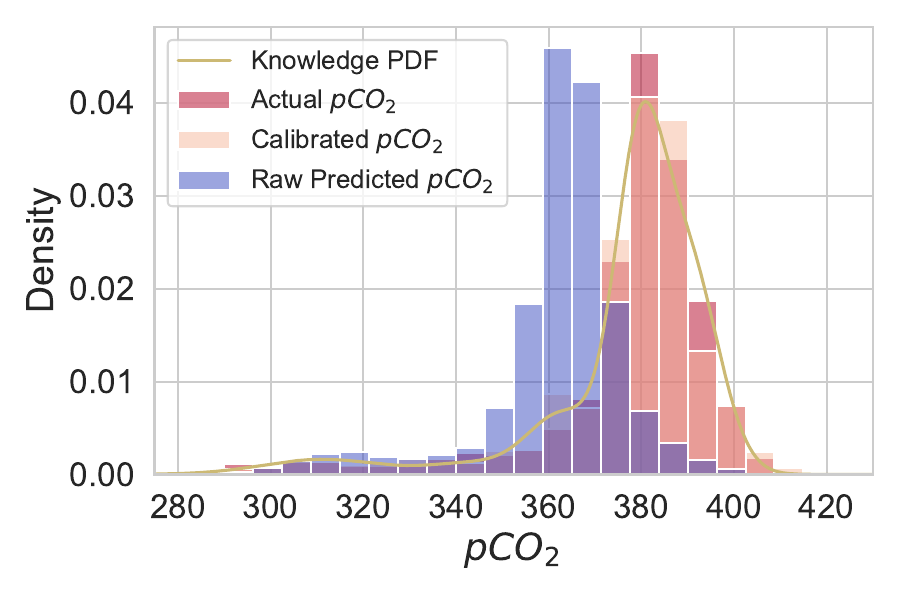}} &
        \subcaptionbox{May\label{fig:may_cali}}{\includegraphics[width=0.31\textwidth]{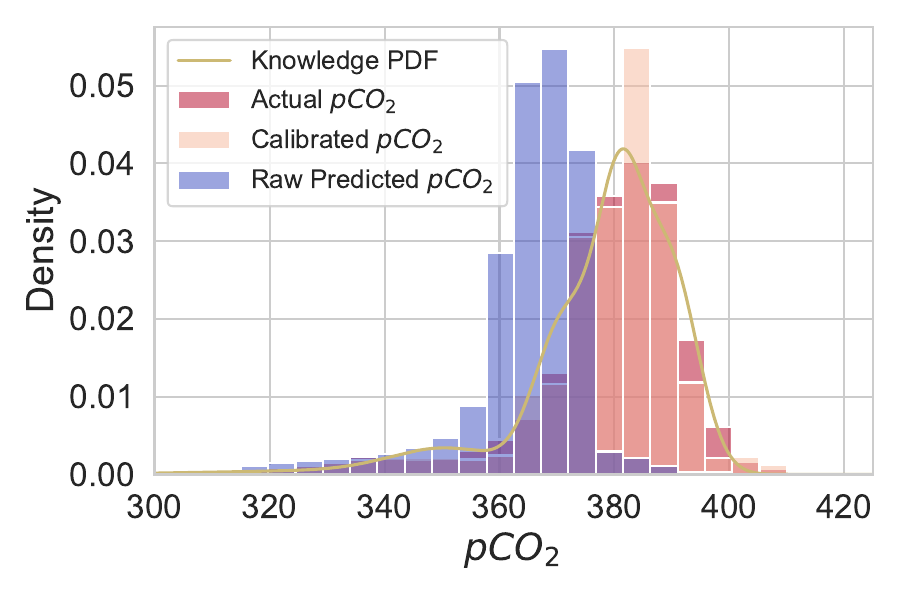}} &
        \subcaptionbox{June\label{fig:jun_cali}}{\includegraphics[width=0.31\textwidth]{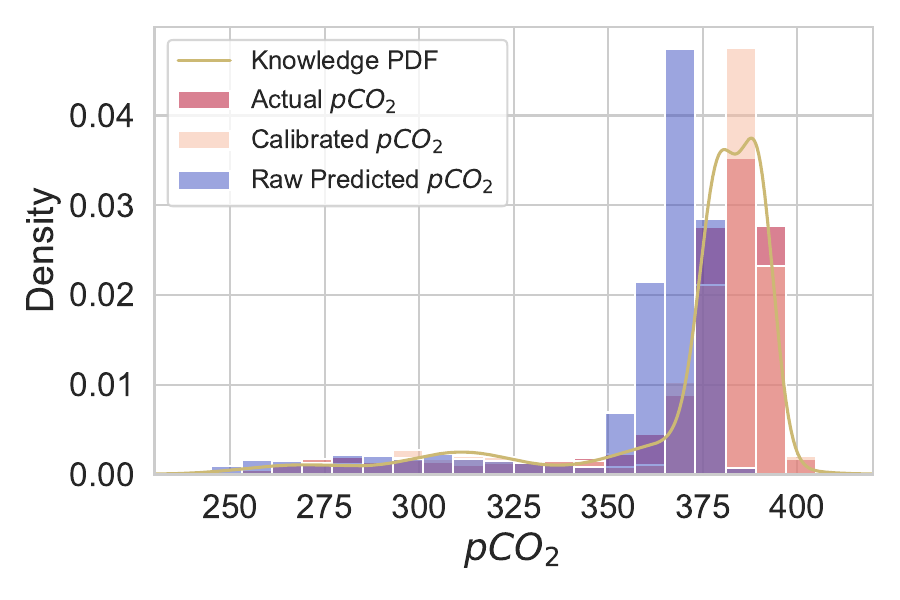}} \\
        \subcaptionbox{July\label{fig:jul_cali}}{\includegraphics[width=0.31\textwidth]{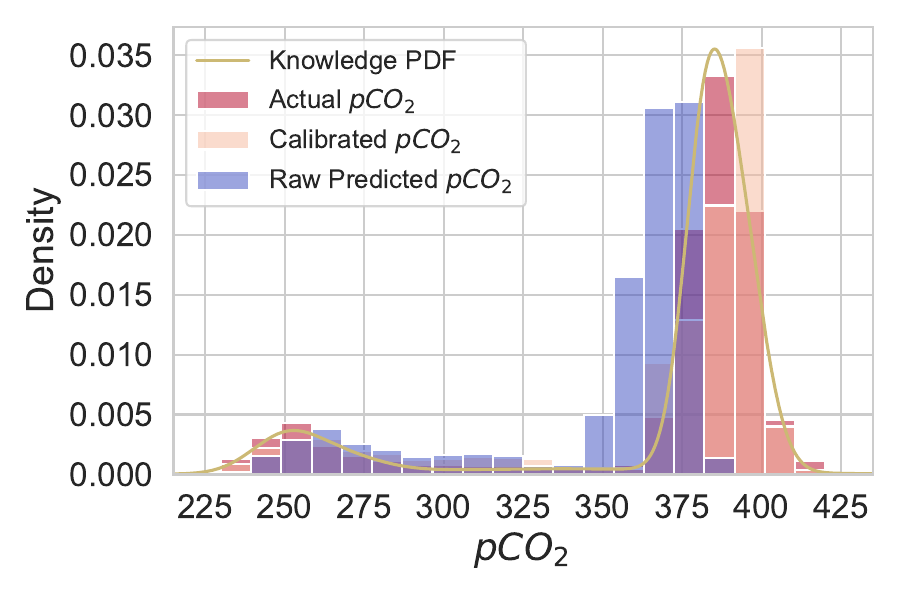}} &
        \subcaptionbox{August\label{fig:aug_cali}}{\includegraphics[width=0.31\textwidth]{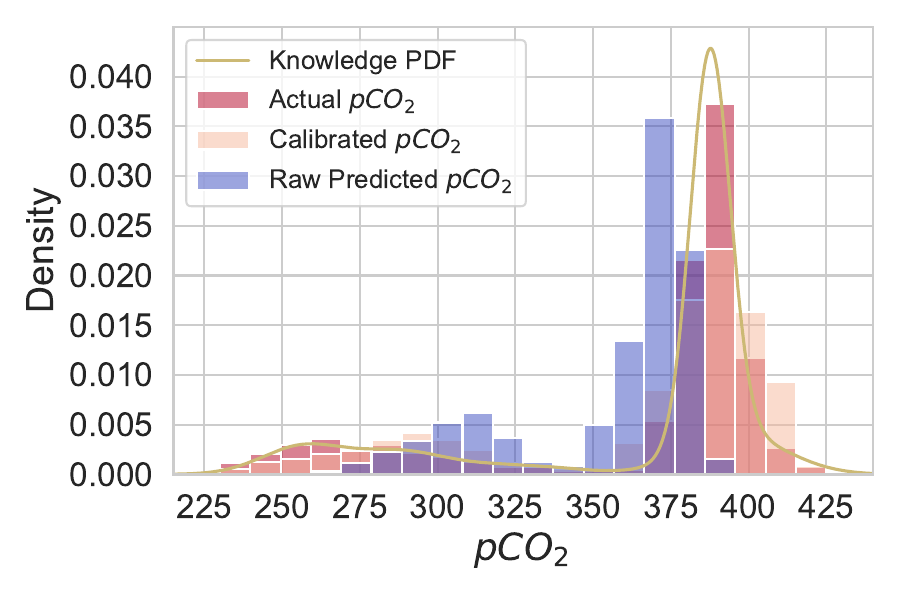}} &
        \subcaptionbox{September\label{fig:sep_cali}}{\includegraphics[width=0.31\textwidth]{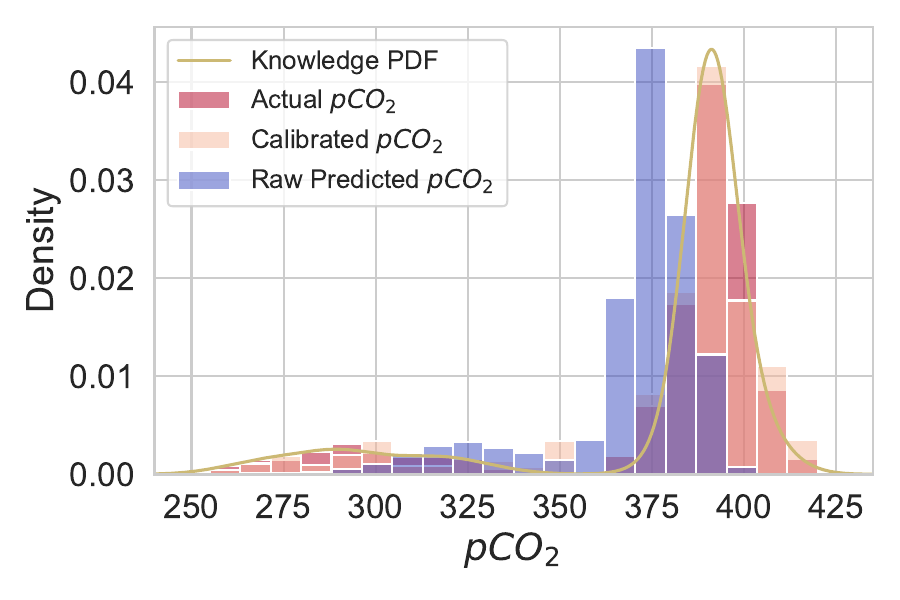}} \\
        \subcaptionbox{October\label{fig:oct_cali}}{\includegraphics[width=0.31\textwidth]{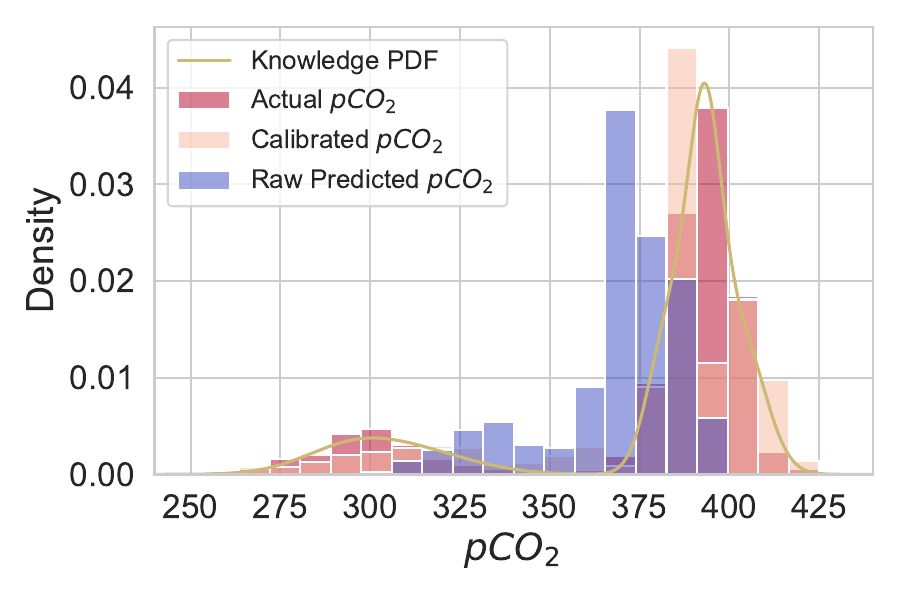}} &
        \subcaptionbox{November\label{fig:nov_cali}}{\includegraphics[width=0.31\textwidth]{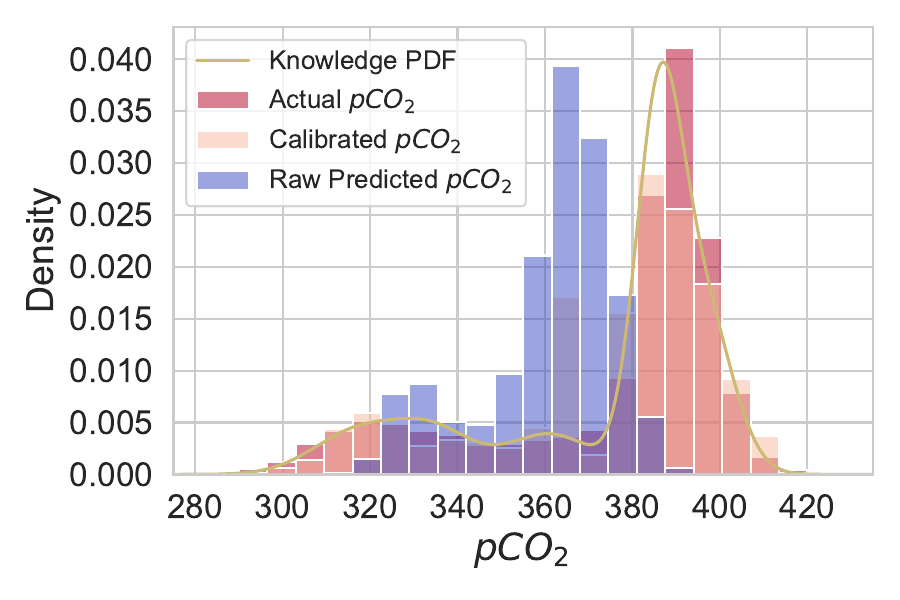}} &
        \subcaptionbox{December\label{fig:dec_cali}}{\includegraphics[width=0.31\textwidth]{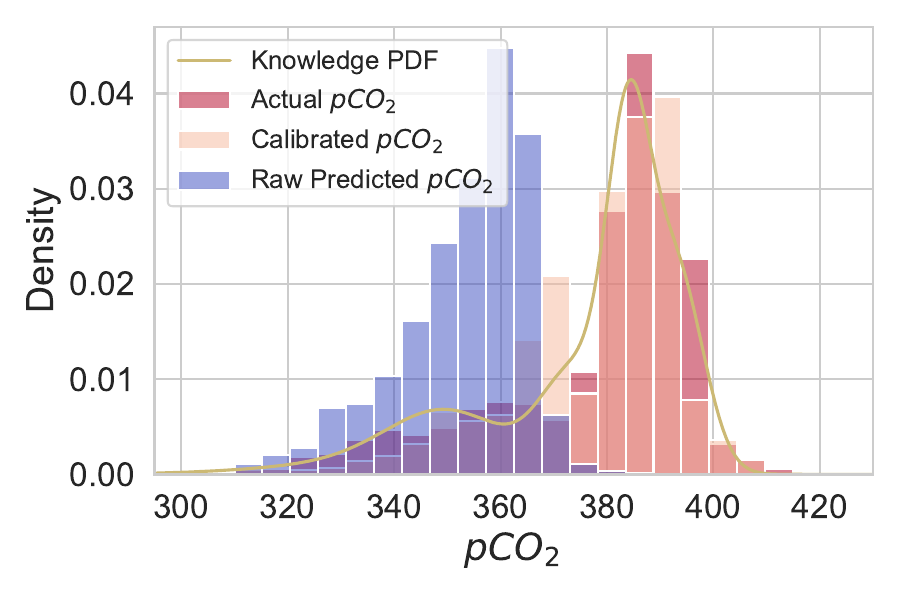}} \\
    \end{tabular}
    \caption{Monthly calibration of pCO\textsubscript{2} predictions for the 6th province in 2010-2016. Each panel corresponds to one calendar month and shows four distributions of pCO\textsubscript{2} values: the empirical distribution of the true CESM001 test data (red histogram), the distribution of raw FFN predictions (blue histogram), the distribution of KSD-calibrated predictions (orange histogram), and the month-specific knowledge distribution from CESM002 (yellow curve) obtained from a Gaussian mixture model. Although the marginal distributions differ from month to month, the CESM-based knowledge distributions and the true CESM001 marginals share similar overall structure, providing a physically informed target for calibration. In many months, the raw predictions are biased or miss important features of the true distribution—for example, they can be overly narrow, shifted relative to the true values, or fail to represent multi-modal behavior when present. After KSD calibration, the orange histograms generally move closer to the red ones and better track both the main modes and the range of observed pCO\textsubscript{2}, while remaining consistent with the yellow knowledge curves. Together with the quantitative results in Tables~\ref{tab:pco2_calibration}, \ref{tab:pco2_monthly_calibration_MSE} and \ref{tab:pco2_monthly_calibration_2WD2}, these panels show that KSD calibration improves the match between the predictive marginals and the true pCO\textsubscript{2} distributions across all months.
    }
    \label{fig:pco2_monthly_calibration}
\end{figure}

Figure~\ref{fig:pco2_monthly_calibration} provides a month-by-month view of how KSD calibration reshapes the predictive distributions of pCO\textsubscript{2} in the 6th province. For each calendar month, we show the empirical distribution of the true CESM001 test values (red histogram), the raw FFN predictions (blue histogram), the KSD-calibrated predictions (orange histogram), and the corresponding knowledge distribution from CESM002 (yellow curve) obtained from the Gaussian mixture model described above. Because CESM001 and CESM002 are generated by the same climate model configuration, their monthly marginals share very similar structure, so the knowledge distributions provide a realistic, model-based description of the expected shape of the pCO\textsubscript{2} distribution in this province.

Across months, the raw FFN predictions often fail to reproduce key distributional features of the true pCO\textsubscript{2} marginals: they may be overly concentrated in a narrow range, shifted relative to the red histogram, or unable to capture multiple modes when the actual and knowledge distributions display more complex shapes. After applying KSD calibration, the orange histograms move closer to the red ones while remaining consistent with the yellow knowledge curves, with better alignment of the main peaks, support, and overall spread. In other words, KSD uses the CESM-based knowledge distributions to correct higher-order aspects of the predictive marginal—beyond simple shifts or rescaling—so that, for each month, the calibrated predictions more faithfully reflect the underlying pCO\textsubscript{2} distribution. This figure thus complements the aggregate metrics in Table~\ref{tab:pco2_calibration},~\ref{tab:pco2_monthly_calibration_MSE} and~\ref{tab:pco2_monthly_calibration_2WD2} by illustrating, in distributional form, how calibration improves both the shape and location of the predictive marginals.

Table~\ref{tab:pco2_monthly_calibration_MSE},~\ref{tab:pco2_monthly_calibration_2WD2} summarize the monthly pCO\textsubscript{2} statistics for the 6th province. The \textit{Location Num.} column indicates the total number of location points, while the \textit{Obv.} column shows the number of distinct observations recorded each month. From the data, it is evident that the calibrated MSE values are consistently lower than the raw MSE values, indicating that the calibration process improves the accuracy of the pCO\textsubscript{2} predictions. The 2-WD\textsuperscript{2} values, which measure the distance between the predicted and true distributions, also decrease significantly after calibration.

\begin{table}[h]
    \caption{Monthly MSE of raw, normalized, and KSD-calibrated pCO\textsubscript{2} predictions in the 6th province. For each calendar month, the table reports the mean squared error (MSE) of the raw FFN predictions, the normalized outputs, and the KSD-calibrated predictions. The \textit{Location Num.} column gives the total number of distinct 1\textdegree$\times$1\textdegree{} grid cells that belong to the 6th province, and the \textit{Obv.} column records the number of distinct SOCAT observations falling in the province in that month. The \textit{Avg} row reports totals for location numbers and observation counts and monthly averages for the three MSE values. Across all twelve months, the calibrated MSE is always lower than or comparable to the raw MSE, with the largest relative reductions occurring in months where the baseline FFN exhibits substantial bias, illustrating that the full calibration pipeline improves point-wise predictive accuracy over a range of data availability regimes.
    }
    
    \label{tab:pco2_monthly_calibration_MSE}
    
    \centering
    \begin{tabular}{rrrrrr}
        \toprule
        Month & Location Num. & Obv. & Raw MSE & Normalized MSE & Calibrated MSE \\
        \midrule
        Jan & 15263 & 458 & 1202.47 & 494.63 & 474.17 \\
        Feb & 16273 & 733 & 784.43 & 453.56 & 387.85 \\
        Mar & 17837 & 767 & 661.13 & 340.86 & 302.94 \\
        \midrule
        Apr & 17491 & 642 & 442.15 & 253.52 & 239.57 \\
        May & 16534 & 708 & 429.49 & 268.77 & 277.54 \\
        Jun & 16130 & 496 & 498.69 & 267.08 & 278.04 \\
        \midrule
        Jul & 15478 & 376 & 694.17 & 453.21 & 452.95 \\
        Aug & 14618 & 218 & 875.93 & 699.88 & 706.89 \\
        Sep & 12671 & 127 & 569.71 & 451.00 & 388.97 \\
        \midrule
        Oct & 12980 & 89 & 588.67 & 412.73 & 347.88 \\
        Nov & 15039 & 241 & 739.91 & 664.72 & 629.00 \\
        Dec & 15443 & 352 & 996.85 & 378.64 & 387.77 \\
        \midrule
        Avg & 185757 & 5207 & 703.71 & 421.83 & 400.80 \\
        \bottomrule
    \end{tabular}
    
\end{table}

\begin{table}[h!]
    \caption{Monthly 2-WD\textsuperscript{2} of raw, normalized, and KSD-calibrated pCO\textsubscript{2} predictions in the 6th province. For each calendar month, the table reports the 2-WD\textsuperscript{2} distance between the predicted and true marginal distributions of pCO\textsubscript{2} for the raw FFN outputs, the normalized predictions, and the KSD-calibrated predictions. As in Table~\ref{tab:pco2_monthly_calibration_MSE}, the \textit{Location Num.} column gives the number of distinct 1\textdegree$\times$1\textdegree{} grid cells assigned to the 6th province and the \textit{Obv.} column records the number of distinct SOCAT observations in that month, while the \textit{Avg} row aggregates counts and monthly averages. In every month, 2-WD\textsuperscript{2} decreases as we move from raw to normalized to KSD-calibrated predictions, with particularly large relative reductions in data-sparse months such as September and October. These patterns show that KSD-based calibration systematically improves the alignment between the predictive marginals and the true pCO\textsubscript{2} distributions across all months.}
    
    \label{tab:pco2_monthly_calibration_2WD2}
    
    \centering
    \begin{tabular}{rrrrrr}
        \toprule
        Month & Location Num. & Obv. & Raw 2-WD\textsuperscript{2} & Normalized 2-WD\textsuperscript{2} & Calibrated 2-WD\textsuperscript{2} \\
        \midrule
        Jan & 15263 & 458 & 766.45 & 26.68 & 25.43 \\
        Feb & 16273 & 733 & 441.75 & 29.06 & 12.43 \\
        Mar & 17837 & 767 & 415.08 & 29.81 & 19.14 \\
        \midrule
        Apr & 17491 & 642 & 256.28 & 14.56 & 6.72 \\
        May & 16534 & 708 & 197.79 & 2.98 & 1.49 \\
        Jun & 16130 & 496 & 244.16 & 10.54 & 9.21 \\
        \midrule
        Jul & 15478 & 376 & 350.69 & 32.14 & 30.48 \\
        Aug & 14618 & 218 & 466.02 & 56.56 & 55.70 \\
        Sep & 12671 & 127 & 329.52 & 37.45 & 25.30 \\
        \midrule
        Oct & 12980 & 89 & 410.73 & 82.09 & 51.82 \\
        Nov & 15039 & 241 & 377.02 & 27.40 & 20.62 \\
        Dec & 15443 & 352 & 694.56 & 16.31 & 15.96 \\
        \midrule
        Avg & 185757 & 5207 & 409.81 & 29.15 & 21.85 \\
        \bottomrule
    \end{tabular}

\end{table}

The location number is relatively stable across months, but the number of distinct SOCAT observations varies substantially, and this heterogeneity in observational density is reflected in both the MSE and 2-WD\textsuperscript{2} values.
Months with relatively dense observational coverage such as March, April and May already have moderate raw MSE values and still exhibit sizable reductions after the normalization and calibration steps. By contrast, late-summer and early-fall months such as August, September, and October have far fewer distinct observations despite a comparable number of grid points, and start from relatively large raw errors. Even in these data-sparse settings the full pipeline yields clear gains: for example, in October the MSE decreases from 588.67 to 347.88 and the 2-WD\textsuperscript{2} drops from 410.73 to 51.82 after normalization and KSD calibration.

Across all twelve months, the MSE is reduced by the end of the pipeline, with the average monthly MSE falling from 703.71 for the raw FFN predictions to 400.80 after calibration. The normalization step lowers MSE in every month, often by several hundred units, and the KSD calibration step further decreases MSE for most months. In a few cases, KSD calibration slightly increases MSE relative to the normalized predictions while still improving it relative to the raw FFN output, indicating a small trade-off between point-wise accuracy and distributional alignment. The largest relative reductions in MSE occur in months where the raw FFN predictions are most biased (e.g., January and December, with more than 60\% reductions), whereas months in which the baseline FFN already performs relatively well (such as May and November) show more modest MSE gains.

For the distributional metric, improvements are even more uniform: 2-WD\textsuperscript{2} decreases monotonically from the raw to normalized to calibrated predictions in every month, with the average value dropping from 409.81 to 21.85. Normalization alone accounts for the bulk of the reduction, bringing 2-WD\textsuperscript{2} down to the tens across all months, and KSD calibration consistently provides an additional refinement. The relative gains are particularly striking in months with very sparse SOCAT coverage such as September and October, where raw 2-WD\textsuperscript{2} values above 300 are reduced by nearly an order of magnitude after the full procedure. This pattern suggests that prior knowledge about the marginal distribution of pCO\textsubscript{2} from CESM002 is especially valuable when local observations are scarce, allowing KSD-based calibration to substantially correct the distribution of FFN predictions even in poorly observed months.

\subsubsection{Robustness Analysis}

\begin{table}[htbp!]
    \caption{Leave-one-out stability of monthly MSE for pCO\textsubscript{2} predictions in the 6th province. For each calendar month, the table reports the mean and standard deviation of the MSE over leave-one-out (LOO) evaluations on the 2010--2016 period, where one evaluation year is omitted at a time and the metric is computed on the remaining six years. The three columns correspond to the raw FFN predictions, the normalized outputs, and the KSD-calibrated predictions. In every month, the calibrated MSE is lower than the raw MSE, and in most months it is also lower than the normalized MSE, indicating that the benefits of calibration extend to month-specific performance rather than being driven by a few favorable years. The reported standard deviations are moderate across months, suggesting that the relative performance ordering of the three methods is stable under different choices of held-out evaluation year.}
    \label{tab:pco2_stability}
    \centering
    \begin{tabular}{cccc}\hline
        \textbf{metric} & \textbf{raw} & \textbf{normalized} & \textbf{calibrated} \\\hline
        \textbf{MSE} & 703.8 (45.4) & 421.8 (14.2) & 400.8 (14.2) \\
        \textbf{2-WD\textsuperscript{2}} & 411.9 (41.1) & 31.8 (3.2) & 24.4 (2.9) \\\hline
    \end{tabular}
\end{table}

We further assess the robustness of the performance metrics using a leave-one-out (LOO) evaluation approach. Specifically, for the evaluation period 2010--2016, we compute the MSE and 2-WD\textsuperscript{2} on all subsets of six years obtained by excluding one year at a time. Table~\ref{tab:pco2_stability} reports the mean and standard deviation of these LOO metrics for each method. The LOO means are nearly identical to the full-sample values in Table~\ref{tab:pco2_calibration}, and the standard deviations are small, indicating that the improvements are not driven by any single evaluation year and are stable with respect to interannual variability in the test period.

\begin{table}[htbp!]
    \caption{Leave-one-out robustness of aggregate pCO\textsubscript{2} performance metrics. Entries show the mean and standard deviation of each metric over leave-one-out (LOO) evaluations on the 2010--2016 period, where in each run one year is excluded and the metrics are computed on the remaining six years. The \textit{raw}, \textit{normalized}, and \textit{calibrated} columns correspond to the original FFN predictions, the normalized outputs, and the KSD-calibrated predictions, respectively. The LOO means are very close to the full-sample values reported in Table~\ref{tab:pco2_monthly_calibration_MSE}, and the standard deviations are small, indicating that the performance differences between methods are stable across different choices of evaluation years.}
    
    \label{tab:pco2_monthly_stability_MSE}
    
    \centering
    \begin{tabular}{rrrr}
        \toprule
        Month & Raw MSE & Normalized MSE & Calibrated MSE \\
        \midrule
        Jan & 1202.52 (89.25) & 494.62 (29.19) & 474.17 (29.17) \\
        Feb & 784.42 (60.35) & 453.58 (22.03) & 387.88 (22.95) \\
        Mar & 660.95 (51.12) & 340.78 (16.21) & 302.87 (17.35) \\
        \midrule
        Apr & 442.10 (32.26) & 253.46 (8.89) & 239.51 (8.55) \\
        May & 429.48 (27.64) & 268.74 (10.38) & 277.51 (11.52) \\
        Jun & 498.85 (41.21) & 267.07 (11.60) & 278.02 (12.27) \\
        \midrule
        Jul & 694.29 (45.29) & 453.13 (30.06) & 452.88 (28.37) \\
        Aug & 875.87 (45.38) & 699.66 (22.78) & 706.67 (23.31) \\
        Sep & 569.89 (24.94) & 451.06 (11.72) & 389.01 (9.16) \\
        \midrule
        Oct & 588.78 (22.07) & 412.76 (12.79) & 347.91 (10.36) \\
        Nov & 740.02 (39.90) & 664.75 (26.97) & 629.04 (27.43) \\
        Dec & 997.07 (75.93) & 378.70 (20.03) & 387.82 (19.90) \\
        \midrule
        Avg & 703.77 (45.36) & 421.81 (14.21) & 400.78 (14.19) \\
        \bottomrule
    \end{tabular}
\end{table}

\begin{table}[htbp!]
    \caption{Leave-one-out stability of monthly 2-WD\textsuperscript{2} for pCO\textsubscript{2} predictions in the 6th province. For each calendar month, the table reports the mean and standard deviation of the 2-WD\textsuperscript{2} distance between the predicted and true marginal pCO\textsubscript{2} distributions over leave-one-out (LOO) evaluations on 2010--2016. As in Table~\ref{tab:pco2_monthly_stability_MSE}, the three columns correspond to the raw FFN, normalized, and KSD-calibrated models. For every month, the mean 2-WD\textsuperscript{2} decreases monotonically from raw to normalized to calibrated, demonstrating that KSD consistently improves distributional alignment on top of the normalization step. The associated standard deviations remain small relative to the mean values, indicating that these distributional gains persist across different choices of held-out year and are not artifacts of a single evaluation split.}
    
    \label{tab:pco2_monthly_stability_2WD2}
    
    \centering
    \begin{tabular}{rrrr}
        \toprule
        Month & Raw 2-WD\textsuperscript{2} & Normalized 2-WD\textsuperscript{2} & Calibrated 2-WD\textsuperscript{2} \\
        \midrule
        Jan & 768.77 (70.79) & 28.95 (7.44) & 27.64 (7.33) \\
        Feb & 444.03 (54.70) & 31.92 (9.19) & 15.20 (6.49) \\
        Mar & 416.93 (47.65) & 32.30 (6.49) & 21.56 (6.15) \\
        \midrule
        Apr & 257.72 (31.04) & 16.42 (2.05) & 8.49 (2.71) \\
        May & 199.41 (31.61) & 4.80 (1.75) & 3.43 (1.62) \\
        Jun & 247.47 (39.76) & 13.51 (2.14) & 12.36 (2.39) \\
        \midrule
        Jul & 354.31 (50.61) & 36.39 (5.76) & 34.80 (3.87) \\
        Aug & 467.75 (40.96) & 59.53 (6.74) & 58.68 (7.02) \\
        Sep & 331.31 (23.63) & 40.41 (9.03) & 27.95 (5.79) \\
        \midrule
        Oct & 411.83 (21.75) & 84.40 (5.89) & 53.59 (4.97) \\
        Nov & 378.53 (27.78) & 29.84 (4.35) & 22.89 (4.15) \\
        Dec & 696.74 (61.26) & 18.67 (3.18) & 18.26 (3.15) \\
        \midrule
        Avg & 411.90 (41.10) & 31.77 (3.16) & 24.40 (2.92) \\
        \bottomrule
    \end{tabular}
\end{table}

Tables~\ref{tab:pco2_monthly_stability_MSE} and \ref{tab:pco2_monthly_stability_2WD2} provide more detailed LOO results at the monthly level. Across all months, the calibrated models consistently reduce both MSE and 2-WD\textsuperscript{2} relative to the raw FFN baseline. For 2-WD\textsuperscript{2}, the values decrease monotonically from raw to normalized to calibrated in every month, whereas for MSE the lowest value is sometimes attained by the normalized model and sometimes by the calibrated model. In most months, however, the calibrated MSE is smaller than both the raw and normalized MSE. Overall, the LOO means and standard deviations in these tables closely track those in Tables~\ref{tab:pco2_monthly_calibration_MSE} and \ref{tab:pco2_monthly_calibration_2WD2}, confirming that our main conclusions about the benefits of KSD calibration are robust to the choice of evaluation years.

\subsection{Application to Online Hybrid Emulator of QG Turbulence}

\subsubsection{Background and Motivation}

To assess the effectiveness of our calibration framework in realistic, high-dimensional settings, we apply it to a quasi-geostrophic (QG) ocean turbulence model \cite{pyqg} in an online deployment context. The QG model is a canonical representation of large-scale, stratified, rotating geophysical flows, capturing essential physical mechanisms such as nonlinear advection, planetary vorticity gradients (the $\beta$-effect), vortex stretching, and baroclinic instability. These processes govern mesoscale turbulence in the ocean and atmosphere and make QG dynamics a widely used testbed for studying the interaction between machine learning and chaotic physical systems.

A defining feature of QG turbulence is the inverse energy cascade, where energy injected at small scales (e.g., via baroclinic forcing) is transferred upscale, generating large, coherent structures such as mesoscale eddies and zonal jets. Accurate modeling of these features requires high spatial resolution. Coarse-resolution QG simulations often fail to resolve eddies, leading to unrealistic dynamics. To address this, \citet{ross2023benchmarking} proposed a fully connected neural network (FCNN) that learns a surrogate subgrid forcing term from low-resolution fields to recover unresolved effects. As discussed previously, this method fails to recover key derived spectral features of the modeled systems, which are fundamentally governed by turbulence physics based on first principles.

\subsubsection{Experiment Setup}

We evaluate KSD calibration in both offline and online settings. The goal is to assess how the proposed method improves the long-term stability and physical realism of emulators under chaotic flow dynamics.

Following \cite{ross2023benchmarking}, we carry out experiments under the \textit{eddy regime}, which is characterized by isotropic mesoscale eddies, resulting from baroclinic instability and vortex interactions. We compare the following four methods against the reference high-resolution simulation (High-res): 1) \textbf{Low-res}: a coarse-resolution simulation without correction; 2) \textbf{Low-res-FCNN}: the low-res model augmented with the learned FCNN subgrid forcing from \cite{ross2023benchmarking};
3) \textbf{Low-res-KSD}: the low-res model with KSD calibration; 4) \textbf{Low-res-FCNN-KSD}: the low-res-FCNN model with KSD calibration. This comprehensive comparison enables us to quantify both the standalone and complementary benefits of data-driven subgrid parameterization and distribution-informed post hoc calibration.

We simulate the high-resolution QG model on a $256 \times 256$ grid using a 1-hour time step. A spin-up period of 31,000 hours was used to ensure the system reaches statistical stationarity. These simulations serve as the ground truth. To construct the low-resolution model, following \cite{ross2023benchmarking}, we apply a spectral operator to project the high-resolution fields onto a $64 \times 64$ grid, preserving only the large-scale components. The low-res model is then run forward for 86,000 hours (approximately 10 years). KSD calibration is applied every 1,000 hours to adjust the prognostic variable $q$ at each vertical level. Snapshots are recorded every 1,000 hours for evaluation. The knowledge distribution used for calibration is constructed from an independent set of 87 low-resolution snapshots, by applying the same spectral operator to the high-res simulation over the same 86,000-hour period. For each vertical level, we aggregate spatial values of $q$ across all snapshots and fit a Gaussian distribution to serve as the target steady-state distribution for KSD calibration.

\subsubsection{Offline Performance}

To test the effectiveness of the approach in a simpler setting, we conduct an offline experiment. Each low-res snapshot at time $t$ is used as the initial condition for a 1,000-hour forward integration of the low-res model. The output at time $t+1000$ is then compared to the corresponding low-res ground truth. Calibration via KSD using the fitted steady-state distribution at $t+1000$ yields modest improvements in pointwise statistics: for example, level-1 $R^2$ increases from 0.533 to 0.541 and MSE decreases from $2.84 \times 10^{-11}$ to $2.79 \times 10^{-11}$; level-2 $R^2$ improves from 0.844 to 0.848 with a corresponding MSE drop from $1.81 \times 10^{-13}$ to $1.77 \times 10^{-13}$.

\subsubsection{Online Performance}

We directly apply the proposed KSD calibration in an online setting to align the state distribution of a low-resolution QG model with that of a high-resolution reference periodically. Rather than relying on a parametric closure during training, our method iteratively adjusts the model output to match the long-term statistical signature of the true system. KSD calibration leads to comparable or better accuracy. We further evaluate KSD calibration's skill to improve the online performance of the low-res model augmented with the learned subgrid forcing from \cite{ross2023benchmarking}. KSD calibration effectively improves online simulations across multiple diagnostics.

Five independent online experiments are conducted to ensure the statistical stability of our evaluation. See Appendix~\ref{pyqgappendix} for the full experiment implementation details.

\begin{figure}[htbp!]
\centering
\includegraphics[width=0.81\textwidth]{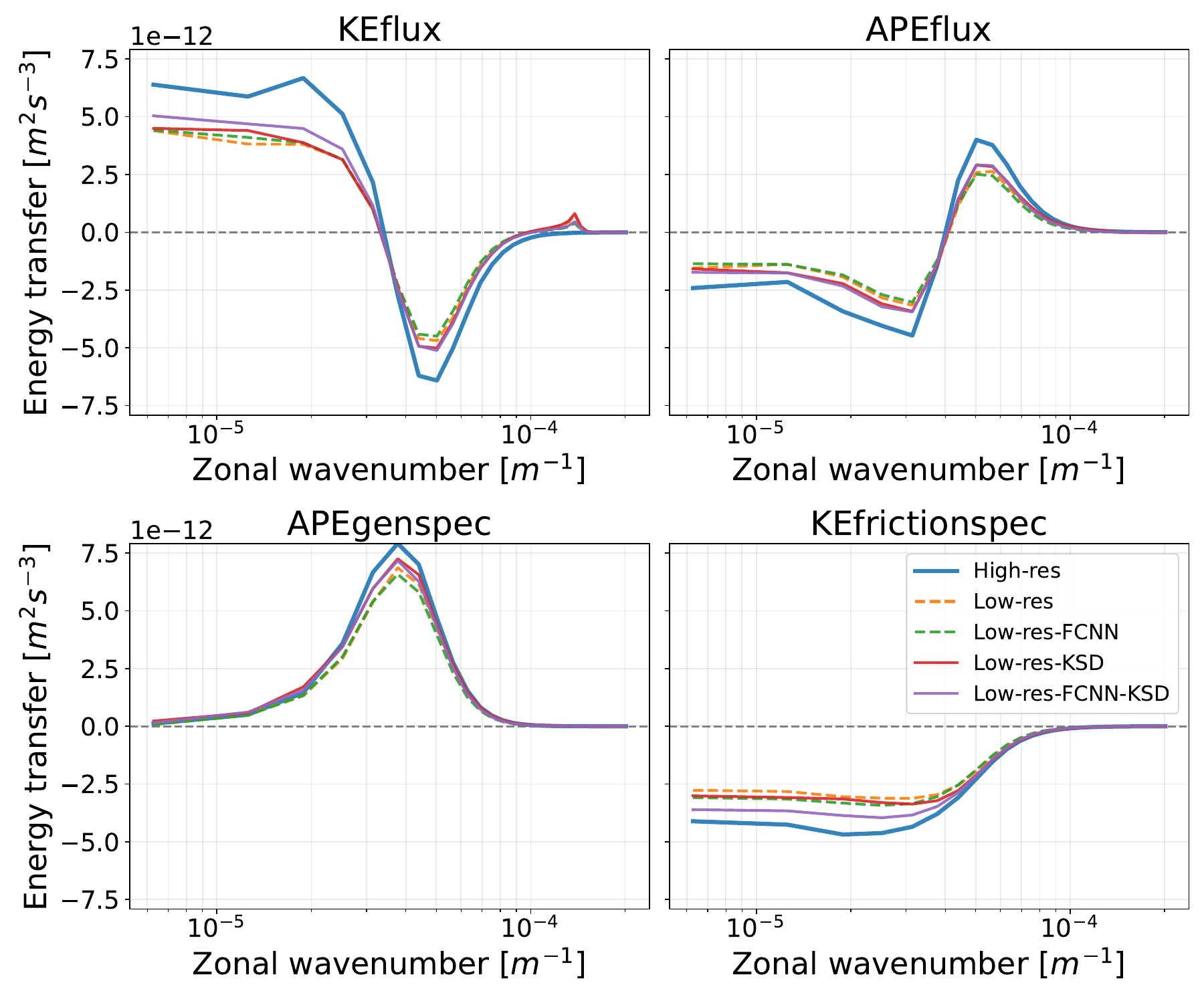} 
\caption{Online spectral energy diagnostics for five modeling setups in quasi-geostrophic turbulence. Each panel shows a derived statistical quantity as a function of zonal wavenumber: kinetic energy flux (KEflux), available potential energy flux (APEflux), APE generation spectrum (APEgenspec), and kinetic energy frictional dissipation spectrum (KEfrictionspec). The high-resolution simulation (blue solid line) provides the target spectra. The coarse low-resolution model (orange dashed line) and its FCNN-augmented variant (green dashed line) remain systematically biased relative to the high-resolution curves, with only modest differences between the orange and green spectra. In contrast, the KSD-calibrated low-resolution model (red solid line) substantially reduces these discrepancies: for KEflux, APEflux, and APEgenspec, the red curves lie much closer to the blue reference than either orange or green, and for KEfrictionspec they move in the correct direction relative to the uncalibrated models. The combined Low-res-FCNN-KSD model (purple solid line) achieves the closest overall match to the high-resolution spectra, nearly coinciding with the blue curves across most wavenumbers. Together with Figure~\ref{fig:pyqg}, this figure illustrates that KSD calibration is effective at restoring physically realistic long-term energy statistics and can significantly enhance the online behavior of learned parameterizations.
} \vspace{-0.2cm}
\label{fig:online_energy_performance}
\end{figure}

Following \cite{ross2023benchmarking}, we assess the online performance of the different models using \textit{spectral energy flux diagnostics}, which provide a summary of the scale-by-scale energetics of the simulated flow. We again focus on four physically meaningful quantities: the spectral flux of kinetic energy (\textbf{KEflux}), the spectral flux of available potential energy (\textbf{APEflux}), the spectral spectrum of APE generation rate (\textbf{APEgenspec}), and the spectrum of kinetic energy dissipation due to bottom drag (\textbf{KEfrictionspec}). These diagnostics collectively characterize the fundamental processes of energy transfer, generation, and dissipation in quasi-geostrophic turbulence and are therefore a stringent test of the physical fidelity of each modeling approach.

Figure~\ref{fig:online_energy_performance} extends Figure~\ref{fig:pyqg} by adding KSD-calibrated variants of the low-resolution and FCNN-augmented models; the high-resolution, low-resolution, and Low-res-FCNN spectra (blue, orange, and green) are identical to those shown in Figure~\ref{fig:pyqg}. As seen there, the FCNN-augmented model remains close to the baseline low-resolution curves and noticeably below the high-resolution reference, so the data-driven parameterization alone only modestly reduces the bias in the long-term statistics. In contrast, applying KSD calibration to the low-resolution model (red solid line) shifts the spectra toward the high-resolution target in all four diagnostics: for KEflux, APEflux, and APEgenspec, the KSD-calibrated curves closely track the amplitude and overall shape of the high-resolution spectra, and even for KEfrictionspec the calibrated spectrum moves in the correct direction relative to the uncalibrated low-resolution models. The combined Low-res-FCNN-KSD model (purple solid line) yields the best overall agreement with the high-resolution reference. These results show that KSD calibration can substantially repair the long-term spectral and statistical structure of the flow, and that it complements, rather than replaces, standard data-driven parameterizations trained on short-horizon prediction losses.

To quantify performance improvements, Figure~\ref{fig:online_similarity_scores} displays {\em improvement} scores calculated from spectral diagnostics \cite{ross2023benchmarking}. First, a {\em spectral diff} score is defined as the root mean squared difference between the spectra of each model and the reference high-resolution.

\begin{figure}[htbp!]
\centering
\includegraphics[width=0.81\textwidth]{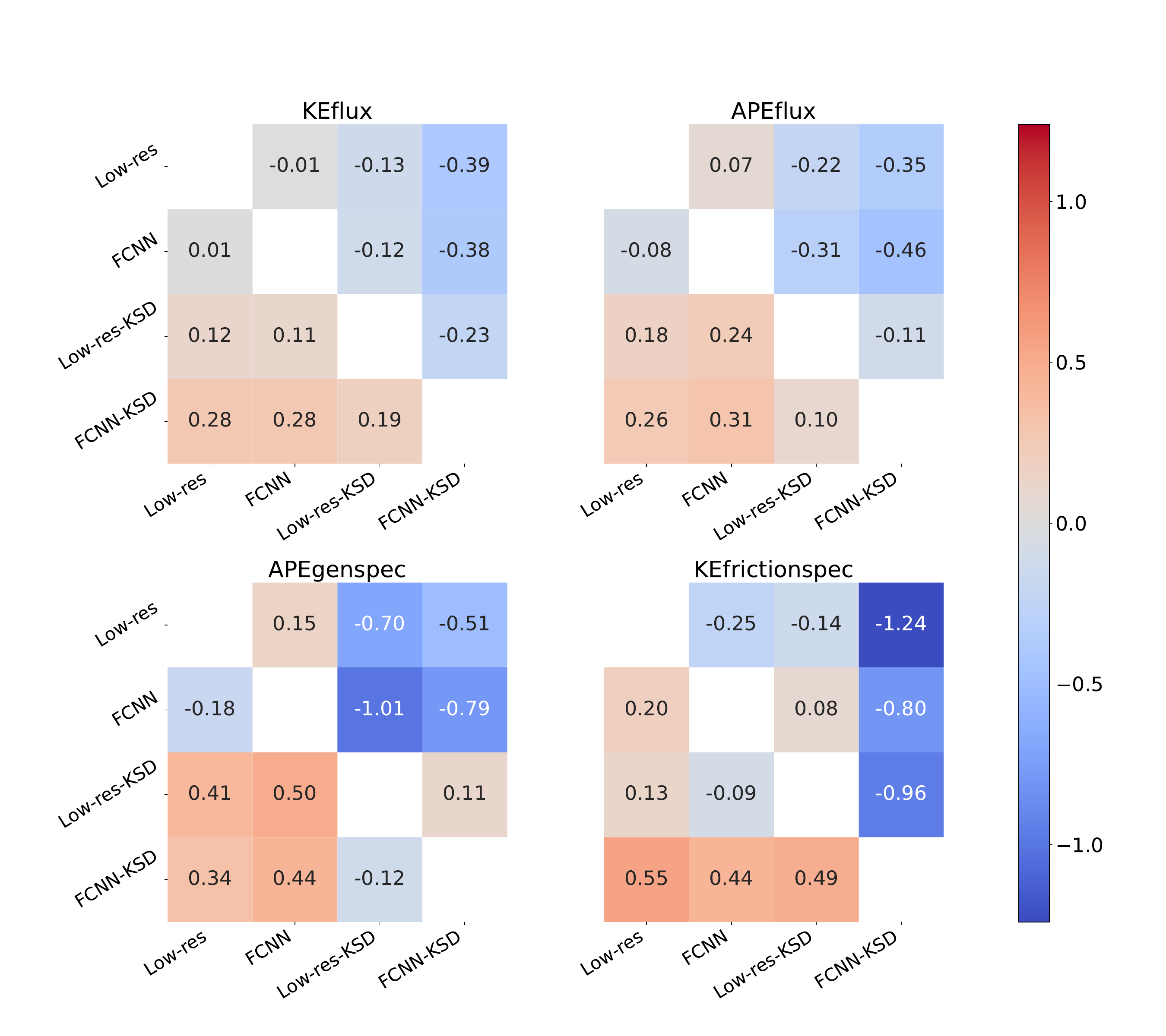}
\caption{Pairwise improvement scores between different model configurations, using the high-resolution simulation as the common target. For each spectral diagnostic (KEflux, APEflux, APEgenspec, KEfrictionspec), we form a matrix whose rows correspond to the \emph{model} and columns to the \emph{baseline}, and whose entries are the improvement scores defined in equation~\eqref{eq:similarity_score}. A positive value in cell $(\text{row},\text{column})$ means that the row model is closer to the high-resolution spectra than the column baseline for that diagnostic, whereas negative values indicate degradation. Warm colors (red) correspond to larger positive improvements and cool colors (blue) to larger negative values. Across all four diagnostics, entries in the rows associated with KSD-calibrated models (Low-res-KSD and FCNN-KSD) are predominantly positive when compared against the Low-res and FCNN baselines, while entries against KSD-based baselines are frequently negative for non-KSD models. Together with Figures~\ref{fig:pyqg} and \ref{fig:online_energy_performance}, these matrices show that KSD calibration consistently moves both the unparameterized low-resolution model and the FCNN-augmented model closer to the high-resolution target, with the combined FCNN-KSD configuration achieving the strongest overall agreement.
}
\label{fig:online_similarity_scores}
\end{figure}

The {\em improvement} score, evaluating the improvement of a {\em model} over a {\em baseline}, with respect to a {\em target}, is computed as:
\begin{equation}
\text{score} = 1 - \frac{\text{spectral\_diff(model, target)}}{\text{spectral\_diff(baseline, target)}}.
\label{eq:similarity_score}
\end{equation}
Higher values (closer to 1) indicate more substantial improvement over the baseline, while negative values imply degradation.

Figure~\ref{fig:online_similarity_scores} summarizes these effects quantitatively. Each panel shows the improvement scores for one spectral diagnostic, with rows corresponding to the \emph{model} and columns to the \emph{baseline} in equation~\eqref{eq:similarity_score}. Entries in the upper-left 2-by-2 block indicate that the FCNN provides at best modest and sometimes negative changes relative to the low-resolution model, consistent with the spectra in Figures~\ref{fig:pyqg}, \ref{fig:online_energy_performance}. In contrast, rows corresponding to the KSD-calibrated models (Low-res-KSD and FCNN-KSD) are predominantly positive when compared against either Low-res or FCNN across all four diagnostics, indicating that KSD systematically moves the models closer to the high-resolution target. Among all configurations, the combined FCNN-KSD model typically attains the highest improvement scores, reinforcing that KSD calibration enhances both unparameterized (Low-res) and learned (FCNN-augmented) models and provides a consistent, quantitative gain in the long-term spectral statistics.

\subsection{Computability}

KSD calibration is very computationally efficient. It does not require substantial training and tuning. All experiments were conducted on an Intel Core i7-14700KF (3.40GHz, 32GB RAM) and an NVIDIA GeForce RTX 4080 GPU (16GB).

\section{Conclusion and Discussion}
The central message of this paper is that, for chaotic systems where errors accumulate over time or correlate over space, calibrating systematic statistical outputs using limited physical knowledge or first-principles physics offers a cost-effective and stable alternative to employing computationally expensive hybrid loss functions in machine learning models. The proposed calibration framework is both efficient and capable of achieving comparable—or even superior—performance, as demonstrated here. This approach holds significant promise for climate system modeling, where long-term state trajectories are inherently unpredictable, yet statistical properties remain reliably predictable—even in coarse-resolution models. 
To the best of our knowledge, this work is the first to explicitly formalize and demonstrate this concept through both differentiable and non-differentiable programming in both offline and online assimilation settings.
A limitation of our approach is the assumption of access to the knowledge distribution, which may not always be readily available in practice. In future work, we aim to integrate our method within a distribution learning framework (e.g., \cite{Li2025GRL}) to jointly infer and calibrate against the steady-state distribution of a dynamical system. Exploring these avenues could open up new possibilities for improving predictions in more complex systems. 

\section*{Acknowledgements}
We acknowledge funding from NSF through the Learning the Earth with Artificial intelligence and Physics (LEAP) Science and Technology Center (STC) (Award \#2019625), and from the Columbia University Data Science Institute.

\section*{Data Availability}


All data associated with this manuscript, along with code necessary to process the data and generate the figures can be found at https://github.com/zwhou99/Distribution-Informed-Prediction.

\section*{Author Contributions}
The specific contributions of each author should be noted with each author denoted by their initials and starting on a new line.

ZH: Conceptualization, methodology development, design of experiments, data processing and archiving, data analysis and visualization, result interpretation, writing.

JS: Theoretical Justification.

SV: Conceptualization, PyQG experiment setup, data analysis.

PJ: Conceptualization, data curating and investigation.

SL: Conceptualization, data analysis, investigation and writing.

TZ: Conceptualization, methodology development, design of experiments, theoretical justification, data analysis and visualization, result interpretation, writing.


\clearpage
\newpage
\appendix
\setcounter{table}{0}
\renewcommand{\thetable}{A\arabic{table}}
\setcounter{figure}{0}
\renewcommand{\thefigure}{A\arabic{figure}}


\section{Theoretical justification}\label{theoryappendix}

Suppose we observe $\{ \hat y_i: i=1,\ldots,n\} $ from $ \hat y_i = y_i + \epsilon_i $, where $y_i $ denote unobserved true values, and $\epsilon_i$'s are i.i.d.\ $N(0,\sigma^2) $. Suppose $U(\hat y_{1:n})$ is an $n$-variable function of $\hat y_1,\ldots, \hat y_n$. Let 
$$
\tilde y_i = \hat y_i - \lambda \frac{\partial U(\hat y_{1:n})}{\partial \hat y_i}.
$$
Denote by $\mathbf{Y}, \mathbf{\hat Y}, \mathbf{\tilde Y} \in \mathbb{R}^n $ their vector version.
Then we have the following result.
\begin{proposition} \label{prop1}
For any twice differentiable function $U$, the choice of $\lambda =\frac{ \sigma^2\sum_{i=1}^n\mathbb{E}\left\{\frac{\partial^2 U(\hat y_{1:n})}{\partial \hat y_i^2} \right\} }{ \sum_{i=1}^n \mathbb{E}\left\{\frac{\partial U(\hat y_{1:n})}{\partial \hat y_i} \right\}^2 }$ minimizes  $ \mathbb{E}\|\mathbf{\tilde Y}-\mathbf{ Y}\|^2$. In particular, with this choice of $\lambda$, we have $ \mathbb{E}\|\mathbf{\tilde Y}-\mathbf{Y}\|^2 \leq \mathbb{E}\|\mathbf{\hat Y}-\mathbf{Y}\|^2 $.
The equality holds iff $\sum_{i=1}^n\mathbb{E}\left\{\frac{\partial^2 U(\hat y_{1:n})}{\partial \hat y_i^2} \right\}=0$.
\end{proposition}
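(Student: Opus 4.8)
The plan is to view $\mathbb{E}\|\mathbf{\tilde Y}-\mathbf{Y}\|^2$ as a quadratic in the scalar $\lambda$ and optimize it in closed form. Writing $\partial_i U := \partial U(\hat y_{1:n})/\partial \hat y_i$, we have $\tilde y_i - y_i = (\hat y_i - y_i) - \lambda\,\partial_i U = \epsilon_i - \lambda\,\partial_i U$, so expanding the square and summing gives
\begin{equation}
\mathbb{E}\|\mathbf{\tilde Y}-\mathbf{Y}\|^2 = \sum_{i=1}^n \mathbb{E}[\epsilon_i^2] - 2\lambda\sum_{i=1}^n \mathbb{E}[\epsilon_i\,\partial_i U] + \lambda^2\sum_{i=1}^n \mathbb{E}[(\partial_i U)^2] = n\sigma^2 - 2\lambda b + \lambda^2 a,
\end{equation}
where $a := \sum_{i} \mathbb{E}[(\partial_i U)^2]$, $b := \sum_{i}\mathbb{E}[\epsilon_i\,\partial_i U]$, and we use that $\mathbb{E}\|\mathbf{\hat Y}-\mathbf{Y}\|^2 = n\sigma^2$. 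Assuming $a>0$, this parabola is minimized at $\lambda^\star = b/a$ with minimum value $n\sigma^2 - b^2/a \le n\sigma^2$, and the inequality is an equality if and only if $b = 0$. So the whole statement reduces to identifying $b$ with $\sigma^2\sum_i \mathbb{E}[\partial_i^2 U]$.

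The key computational step is Gaussian integration by parts (Stein's lemma). Fix $i$ and condition on $\{\epsilon_j : j\neq i\}$. Under this conditioning, $\epsilon_i\sim N(0,\sigma^2)$ is independent of the remaining randomness, and the map $t\mapsto \partial_i U(y_{1:n}+\epsilon_{1:n})$, regarded as a function of $t=\epsilon_i$ with the other coordinates frozen, has derivative $t\mapsto \partial_i^2 U$. The one-dimensional Stein identity $\mathbb{E}[X g(X)] = \sigma^2\,\mathbb{E}[g'(X)]$ for $X\sim N(0,\sigma^2)$ then yields $\mathbb{E}[\epsilon_i\,\partial_i U \mid \epsilon_{-i}] = \sigma^2\,\mathbb{E}[\partial_i^2 U \mid \epsilon_{-i}]$, and taking expectations gives $\mathbb{E}[\epsilon_i\,\partial_i U] = \sigma^2\,\mathbb{E}[\partial_i^2 U]$. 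Summing over $i$,
\begin{equation}
b = \sigma^2 \sum_{i=1}^n \mathbb{E}\!\left[\frac{\partial^2 U(\hat y_{1:n})}{\partial \hat y_i^2}\right].
\end{equation}
Substituting into $\lambda^\star = b/a$ reproduces exactly the formula in the statement, and the equality condition $b=0$ becomes $\sum_i \mathbb{E}[\partial_i^2 U] = 0$.

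The main obstacle is entirely analytic rather than conceptual: one must justify the integration-by-parts identity and the implicit interchange of differentiation and expectation. This requires mild regularity beyond ``twice differentiable'' — e.g.\ local absolute continuity of $\partial_i U$ in each argument, together with integrability of $\epsilon_i\,\partial_i U$, $\partial_i^2 U$, and $(\partial_i U)^2$, and a decay condition so that the boundary terms vanish when integrating $\int \partial_t\!\big(\phi_\sigma(t)\,\partial_i U\big)\,dt$ with $\phi_\sigma$ the $N(0,\sigma^2)$ density and $t\,\phi_\sigma(t) = -\sigma^2\,\phi_\sigma'(t)$. I would fold these into standing assumptions on $U$; they are satisfied in the KSD application, where $U_p$ is built from smooth kernels and a smooth score function. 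A minor point worth flagging is the degenerate case $a=0$, in which every $\partial_i U$ vanishes almost surely, $\mathbf{\tilde Y} = \mathbf{\hat Y}$, and the ratio defining $\lambda$ is $0/0$; this case is excluded or resolved by convention.
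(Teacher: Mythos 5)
Your argument is correct and follows essentially the same route as the paper: expand the risk $\mathbb{E}\|\mathbf{\tilde Y}-\mathbf{Y}\|^2$ as a quadratic in $\lambda$, apply Stein's lemma (Gaussian integration by parts) to identify the linear coefficient with $\sigma^2\sum_i \mathbb{E}[\partial_i^2 U]$, and minimize the parabola. The paper applies the one-dimensional Stein identity without spelling out the conditioning on $\epsilon_{-i}$ and is silent on the regularity and degenerate ($a=0$) issues you correctly flag, but the substance of the two proofs is the same.
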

\begin{proof}
    We apply the following celebrated Stein's Lemma \citep{csiszar2011information}:
    \begin{lemma}[Stein's Lemma]
    Let $Z \sim N(0,1)$. Let $h : \mathbb{R}\to  \mathbb{R}$ be an absolutely continuous function (differentiable is suﬀicient) such that $\mathbb{E}[|h'(Z)|]<\infty $. Then $\mathbb{E}[h'(Z)] = \mathbb{E}[Zh(Z)] $.
    \end{lemma}
    By Stein's Lemma, we have
    $$
    \mathbb{E}\left[ (\hat y_i - y_i) \left\{\frac{\partial U(\hat y_{1:n})}{\partial \hat y_i} \right\} \right] = \sigma^2 \mathbb{E}\left\{\frac{\partial^2 U(\hat y_{1:n})}{\partial \hat y_i^2} \right\}.
    $$
Hence we have
    \begin{align*}
    &~\mathbb{E}\|\mathbf{\tilde{Y}}-\mathbf{Y}\|^2 - \mathbb{E}\|\mathbf{\hat{Y}}-\mathbf{Y}\|^2\\ 
    =&~ \sum_{i=1}^n \mathbb{E}(\tilde y_i-y_i)^2 - \sum_{i=1}^n \mathbb{E}( \hat{y}_i-y_i)^2 \\  
    =&~ \sum_{i=1}^n\mathbb{E}\left\{\hat{y}_i-y_i -\lambda \frac{\partial U(\hat{y}_{1:n})}{\partial \hat{y}_i}\right\}^2 - \sum_{i=1}^n \mathbb{E}( \hat{y}_i-y_i)^2 \\
    =&~  - 2\lambda\sum_{i=1}^n\mathbb{E}\left[ (\hat{y}_i - y_i) \left\{\frac{\partial U(\hat{y}_{1:n})}{\partial \hat{y}_i} \right\} \right] + \lambda^2\sum_{i=1}^n \mathbb{E}\left\{\frac{\partial U(\hat{y}_{1:n})}{\partial \hat{y}_i} \right\}^2 \\
    =&~ - 2\lambda\sigma^2\sum_{i=1}^n\mathbb{E}\left\{\frac{\partial^2 U(\hat{y}_{1:n})}{\partial \hat{y}_i^2} \right\} + \lambda^2\sum_{i=1}^n \mathbb{E}\left\{\frac{\partial U(\hat{y}_{1:n})}{\partial \hat{y}_i} \right\}^2, 
    \end{align*}
    which is minimized with the choice of $\lambda =\frac{ \sigma^2\sum_{i=1}^n\mathbb{E}\left\{\frac{\partial^2 U(\hat{y}_{1:n})}{\partial \hat{y}_i^2} \right\} }{ \sum_{i=1}^n \mathbb{E}\left\{\frac{\partial U(\hat{y}_{1:n})}{\partial \hat{y}_i} \right\}^2 }$. 
    Therefore, with that choice of $\lambda$, $\mathbb{E}\|\mathbf{\tilde{Y}}-\mathbf{Y}\|^2$ is also minimized, and the minimal is
\begin{align*}
    \mathbb{E}\|\mathbf{\tilde{Y}}-\mathbf{Y}\|^2 
    =\mathbb{E}\|\mathbf{\hat{Y}}-\mathbf{Y}\|^2 - \frac{ \sigma^4\left[\sum_{i=1}^n\mathbb{E}\left\{\frac{\partial^2 U(\hat{y}_{1:n})}{\partial \hat{y}_i^2} \right\}\right]^2 }{ \sum_{i=1}^n \mathbb{E}\left\{\frac{\partial U(\hat{y}_{1:n})}{\partial \hat{y}_i} \right\}^2 }
    =n\sigma^2 - \frac{ \sigma^4\left[\sum_{i=1}^n\mathbb{E}\left\{\frac{\partial^2 U(\hat{y}_{1:n})}{\partial \hat{y}_i^2} \right\}\right]^2 }{ \sum_{i=1}^n \mathbb{E}\left\{\frac{\partial U(\hat{y}_{1:n})}{\partial \hat{y}_i} \right\}^2 }.
\end{align*}

    \hfill
\end{proof}

\newpage

\section{Implementation Details of the Application to air-sea CO2 flux}\label{pco2appendix}
In this appendix section, we provide more details for the application to air-sea CO2 flux example.

\subsection{Dataset information}

\paragraph{Community Earth System Model} The Community Earth System Model (CESM \cite{kay2015community}) is a sophisticated climate modeling framework developed by the National Center for Atmospheric Research (NCAR). CESM is designed to simulate the interactions between the atmosphere, oceans, land surface, and sea ice, providing a comprehensive tool for studying Earth's climate system. It integrates various components such as the atmosphere model (CAM), ocean model (POP), sea ice model (CICE), and land model (CLM), among others. CESM allows researchers to conduct experiments to understand climate variability and change, forecast future climate scenarios, and investigate the potential impacts of different factors on the global climate. Its modular structure enables flexibility and customization, making it a vital resource for climate scientists worldwide.

\paragraph{Surface Ocean CO2 Atlas} The Surface Ocean CO2 Atlas (SOCAT \cite{socat1,socat2}) is an extensive, collaborative effort to compile and provide access to quality-controlled observations of carbon dioxide (CO2) concentrations in the surface ocean. SOCAT aggregates data from numerous international research programs, offering a comprehensive dataset that spans multiple decades and covers all major ocean basins. This invaluable resource supports research into the global carbon cycle, ocean acidification, and the role of oceans in mitigating climate change. By standardizing and verifying CO2 measurements, SOCAT enables accurate assessments of spatial and temporal trends in oceanic CO2 levels, facilitating better understanding and modeling of the Earth's climate system and informing policy decisions related to carbon management and climate mitigation strategies.

\subsection{Replication of SOM-FFN}

For our example, we reconstructed the SOM-FFN on CESM001 by closely following the steps outlined by \cite{gloege2021quantifying} with minor adjustments. SOM-FFN is a non-linear regression method that integrates self-organizing maps (SOM) and feed-forward neural networks (FFN) to extrapolate sparse pCO\textsubscript{2} observations onto a global 1° × 1° grid at monthly intervals. The datasets we used include Sea Surface Temperature (SST) and Surface Chlorophyll-a (Chl-a) from satellite data, Sea Surface Salinity (SSS) from in-situ sources, Mixed Layer Depth (MLD) climatology from Argo floats, and atmospheric CO\textsubscript{2} mixing ratio (xCO\textsubscript{2}). These variables are essential proxies for processes affecting pCO\textsubscript{2}.

The initial step employed SOM to classify the global ocean into 16 biogeochemical provinces based on climatological variables, including surface ocean pCO\textsubscript{2} from \cite{takahashi2009climatological}, SST, SSS, and MLD. This classification leverages regional consistency in the main drivers of pCO\textsubscript{2} variability. Subsequently, a non-linear regression model is developed to estimate pCO\textsubscript{2} using the environmental drivers (SST, SSS, MLD, Chl-a, and xCO\textsubscript{2}). These variables vary monthly from 1982 to 2016, with any gaps filled by climatology or omitted. For each province, a separate FFN was developed based on Surface Ocean CO\textsubscript{2} ATlas (SOCAT) observations \citep{socat1, socat2} in the first 28 years (1982-2009), avoiding mechanistic assumptions. After training, pCO\textsubscript{2} in the last 7 years (2010-2016) was reconstructed (point-wise predictions) by FFN separately in each province.

\begin{figure}[htbp!]
\centering
\begin{tabular}{ll}
(A) Coverage of the 6th province.  & (B) Observation coverage in the 6th province. \\
\includegraphics[width=0.43\textwidth]{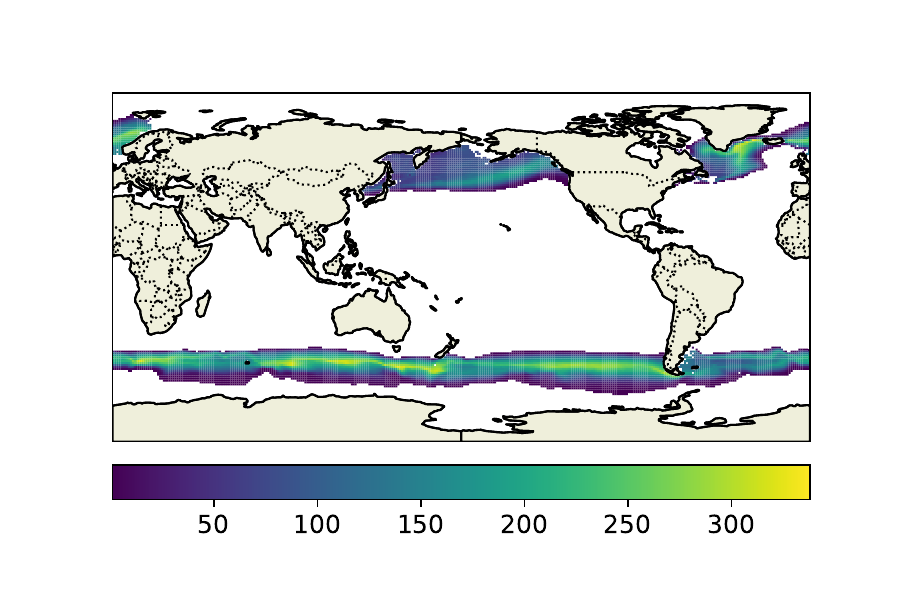} &  
\includegraphics[width=0.43\textwidth]{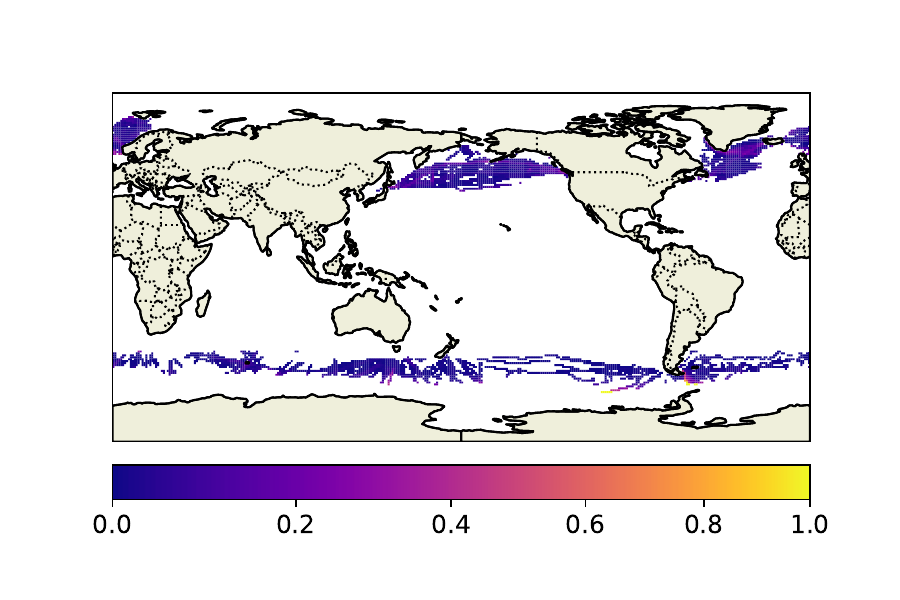}\\
(C) Data availability (all locations). & (D) Data availability (locations with observation).\\
\includegraphics[width=0.43\textwidth]{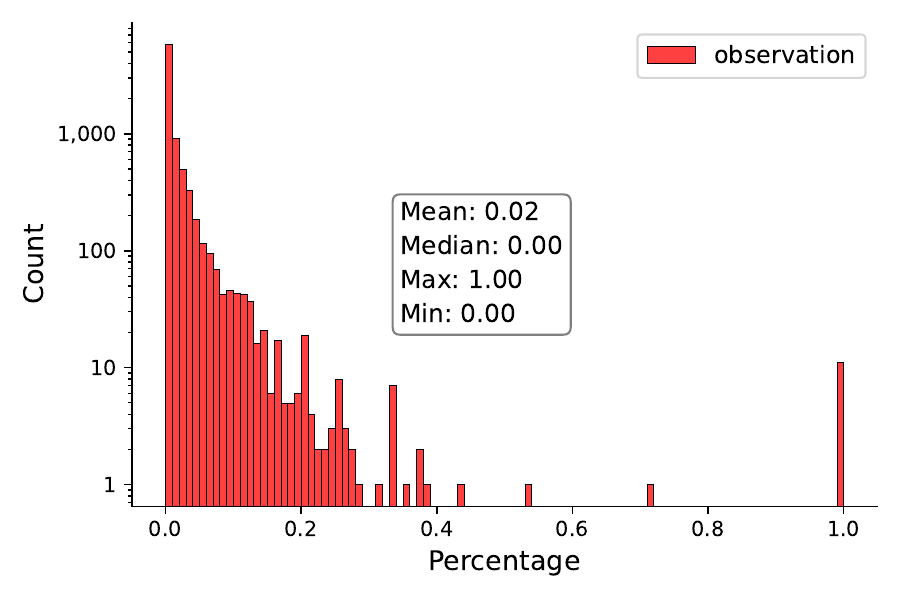} &
\includegraphics[width=0.43\textwidth]{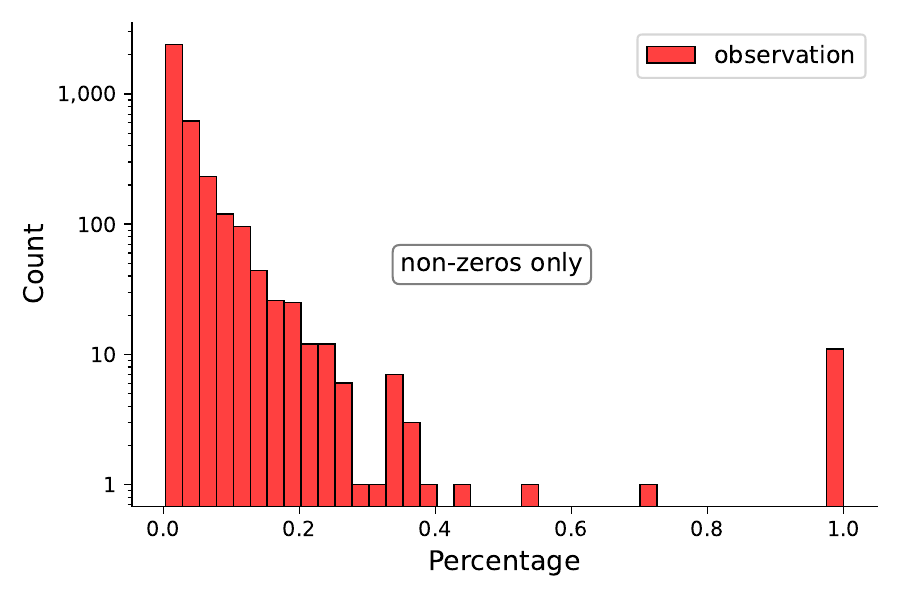}
\end{tabular}
    \caption{Heterogeneous observation coverage in the 6th province for modeling sea--air CO$_2$ fluxes.
(A) Frequency (in months) with which each grid cell is assigned to the 6th province according to the Self-Organizing Map (SOM) clustering from \cite{landschutzer2016decadal}, showing that only a subset of locations are persistently classified into this province while many others enter only occasionally. 
(B) Data availability percentage for months in which each grid cell belongs to the 6th province, highlighting strong spatial variability in observational coverage within the province.
(C) Histogram of data availability for all grid cells in the 6th province, including locations with zero observations. Most locations have essentially no data, and the mean availability is only about 2\%, indicating that observations are extremely sparse overall.
(D) Histogram of data availability restricted to locations with at least one observation. Among these observed locations, data availability spans a wide range from very sparse to near-complete coverage, revealing pronounced heterogeneity even within the subset of sampled grid cells.
Together, these panels emphasize that the 6th province is characterized by highly uneven and spatially structured data availability, which poses a challenge for learning and evaluating data--driven models.}
    
    \label{fig:obv_pro15_com}
\end{figure}

Among the 16 biogeochemical provinces, the 6th province exhibits highly heterogeneous spatial and temporal sampling. The \emph{denominator} map (Figure~\ref{fig:obv_pro15_com}-(A)) shows how often each grid cell is assigned to the 6th province over the full time series: a small set of locations are included frequently, whereas large portions of the province belong to it only intermittently or not at all. Conditional on a grid cell belonging to the 6th province, the \emph{percentage} map (Figure~\ref{fig:obv_pro15_com}-(B)) reports the percentage of those months in which SOCAT measurements are available, revealing strong spatial variability in observational coverage within the province.

Figure~\ref{fig:obv_pro15_com}-(C) and (D) summarize this heterogeneity in terms of data availability. Figure~\ref{fig:obv_pro15_com}-(C) shows the distribution of data availability across all province--grid cells: most locations have zero observations, the median availability is 0\%, and the mean availability is only about 2\%, indicating that observational coverage is extremely sparse overall. Figure~\ref{fig:obv_pro15_com}-(D) restricts attention to locations with at least one observation and shows that, even among sampled grid cells, data availability ranges from very rare to near-complete coverage.

On top of the already sparse and biased global SOCAT coverage \cite{socat1, socat2}, this pronounced and spatially structured heterogeneity hampers the FFN's ability to learn the full range of regional variability in this province and leads to substantial prediction errors, particularly in poorly observed areas. As a result, the calibration step becomes especially important here, as it leverages prior information on the marginal pCO\textsubscript{2} distribution to partially compensate for these observational gaps and improve the reconstruction in poorly observed areas.

\clearpage

\section{Implementation Details of the Application to an Online Hybrid Emulator of Quasi-Geostrophic Turbulence}\label{pyqgappendix}

In this appendix section, we provide additional results for the application to an Online Hybrid Emulator of Quasi-Geostrophic Turbulence.

\subsection{Model Setup}\label{sec:QGModelsetup}

We adopt a two-layer quasigeostrophic (QG) model to simulate large-scale baroclinic turbulence in a stratified, rotating fluid. The governing dynamics evolve the potential vorticities \( q_1 \) and \( q_2 \) in each layer according to
\begin{align}
\partial_t q_1 + J(\psi_1, q_1) + \beta \partial_x \psi_1 &= \text{ssd}, \\
\partial_t q_2 + J(\psi_2, q_2) + \beta \partial_x \psi_2 &= -r_{\text{ek}} \nabla^2 \psi_2 + \text{ssd},
\end{align}
where \( \psi_1 \) and \( \psi_2 \) are the streamfunctions, \( \beta \) is the planetary vorticity gradient, and \( r_{\text{ek}} \) is a linear bottom drag applied only to the lower layer. The right-hand side term \texttt{ssd} represents small-scale dissipation that removes enstrophy at high wavenumbers. The Jacobian operator is defined as \( J(A, B) = A_x B_y - A_y B_x \).

Potential vorticity in each layer is related to the streamfunction via
\begin{align}
q_1 &= \nabla^2 \psi_1 + F_1(\psi_2 - \psi_1), \\
q_2 &= \nabla^2 \psi_2 + F_2(\psi_1 - \psi_2),
\end{align}
with coupling coefficients \( F_1 = k_d^2 / (1 + \delta^2) \), and \( F_2 = \delta F_1 \). The deformation wavenumber \( k_d \) is defined as
\[
k_d^2 = \frac{f_0^2}{g} \cdot \frac{H_1 + H_2}{H_1 H_2},
\]
where \( H_1 \) and \( H_2 \) are the upper and lower layer depths, respectively, and \( H = H_1 + H_2 \) is the total depth of the fluid column.

All simulations are carried out in a square periodic domain of size \( L = 10^6 \, \mathrm{m} \). We use a reference setup termed the \emph{eddy configuration}, with layer depths \( H_1 = 500 \, \mathrm{m} \) and \( H_2 = 2000 \, \mathrm{m} \), reduced gravity \( g' = 9.81 \, \mathrm{m/s^2} \), and a zonal velocity shear \( \Delta U = 0.025 \, \mathrm{m/s} \). The Coriolis gradient is set to \( \beta = 1.5 \times 10^{-11} \, \mathrm{m}^{-1}\mathrm{s}^{-1} \), and the bottom drag coefficient is \( r_{\text{ek}} = 5.787 \times 10^{-7} \, \mathrm{s}^{-1} \). These choices give a deformation radius of \( r_d = 15{,}000 \, \mathrm{m} \).

To investigate the effect of resolution, we use two grid setups. The high-resolution case uses a \( 256 \times 256 \) grid, giving \( \Delta x \approx 3906 \, \mathrm{m} \) and a deformation-to-grid ratio \( r_d / \Delta x \approx 3.84 \), sufficient to resolve mesoscale dynamics. In the low-resolution case, we use a \( 64 \times 64 \) grid with \( \Delta x = 15{,}625 \, \mathrm{m} \), so that \( r_d / \Delta x \approx 0.96 \), falling below the eddy-resolving threshold.

Each simulation is initialized using the high-resolution model, which is integrated for 31,000 hours to allow the system to reach a statistically stationary state. Following this warm-up phase, the simulation proceeds for an additional 86,000 hours, during which a snapshot is recorded every 1,000 hours—resulting in a total of 87 snapshots spanning approximately 10 years of model time. The numerical time step is fixed at \( \Delta t = 1 \, \mathrm{hr} \).

\subsection{FCNN Training}

We replicate the training procedure of \citep{ross2023benchmarking}, using the public dataset released with their study.

Each FCNN is trained to predict one of the five subgrid forcing targets defined in \citep{ross2023benchmarking}. Inputs are derived from coarse-grained state variables, using all non-empty subsets of the three input fields described in the same study, yielding seven possible input combinations. Coupling each input combination with the five target types results in 35 distinct FCNN configurations.

Each model processes full $64 \times 64$ input and output fields, and is trained separately for each fluid layer. The architecture consists of eight convolutional layers with ReLU activations and circular padding. The first two layers use 128 and 64 filters, followed by six layers each with 32 filters. Batch normalization is applied to all layers except the output layer. Models are trained for 50 epochs using mean squared error (MSE) loss on mini-batches of size 64.


\subsection{Filtering and Coarse-Graining}

To generate coarse-resolution training data from high-resolution simulations, we apply a real-space filtering and averaging procedure. This approach is consistent with operations commonly used in general circulation models, where spectral representations are typically unavailable.

Specifically, we use the GCM-Filters package, which applies a diffusion-based filter designed to approximate the spectral transfer function of a Gaussian filter. The filter is implemented using polynomials of the Laplacian operator and applied directly on the high-resolution Cartesian grid.

After filtering, we perform spatial averaging to reduce the resolution by a factor of \( K \) in each horizontal direction. That is, we divide the domain into non-overlapping boxes of size \( K \times K \), and compute the average within each box to obtain coarse-grained fields. In all experiments, we set \( K = 4 \), corresponding to downsampling from \( 256 \times 256 \) to \( 64 \times 64 \).

This filtering and coarse-graining strategy avoids spectral assumptions and ensures compatibility with physical-space operations typical of ocean and climate models.

\newpage

\bibliographystyle{abbrvnat}  
\bibliography{KSD}

\end{document}